\providecommand{\customgenericname}{}
\newcommand{\newcustomtheorem}[2]{%
  \newenvironment{#1}[1]
  {%
   \renewcommand\customgenericname{#2}%
   \renewcommand\theinnercustomgeneric{##1}%
   \innercustomgeneric
  }
  {\endinnercustomgeneric}
}
\newtheorem{claim}{Claim} \newtheorem{definition}{Definition}
\newtheorem{fact}{Fact}
\newcommand{\codepar}[1]{\ensuremath{[\![#1]\!]}}
\crefname{equation}{Eq.\!}{Eqs.\!}
\crefname{figure}{Fig.\!}{Figs.\!}
\mathchardef\mhyphen="2D
\begin{document}

\title{Fault-tolerant quantum error correction using error weight parities}

\author{Theerapat Tansuwannont}
\email{ttansuwannont@uwaterloo.ca}
\affiliation{
    Institute for Quantum Computing and Department of Physics and Astronomy,
    University of Waterloo,
    Waterloo, Ontario, N2L 3G1, Canada
    }
    
\author{Debbie Leung}
\email{wcleung@uwaterloo.ca}
\affiliation{
   Institute for Quantum Computing and Department of Combinatorics and Optimization,
    University of Waterloo,
    Waterloo, Ontario, N2L 3G1, Canada
    }
\affiliation{
	Perimeter Institute for Theoretical Physics, 
	Waterloo, Ontario, N2L 2Y5, Canada
}

\begin{abstract}
In quantum error correction using imperfect primitives, errors of high weight arising from a few faults are major concerns since they might not be correctable by the quantum error correcting code. Fortunately, some errors of different weights are logically equivalent and the same correction procedure is applicable to all equivalent errors, thus correcting high-weight errors is sometimes possible. In this work, we introduce a technique called weight parity error correction (WPEC) which can correct Pauli error of any weight in some stabilizer codes provided that the parity of the weight of the error is known. We show that the technique is applicable to concatenated codes constructed from the \codepar{7,1,3} Steane code or the \codepar{23,1,7} Golay code. We also provide a fault-tolerant error correction protocol using WPEC for the \codepar{49,1,9} concatenated Steane code which can correct up to 3 faults and requires only 2 ancillas.
\end{abstract}

\pacs{03.67.Pp}

\maketitle

\section{Introduction}
\label{sec:Intro}%

One crucial component for large-scale quantum computers is fault-tolerant error correction (FTEC), which suppresses error propagation throughout the circuits. An arbitrarily small logical error rate can be achieved through code concatenation, given that the physical error rate is below some constant threshold value \cite{Shor96,AB08,Preskill98,KLZ96,AGP06}. However, increasing overheads are needed for decreasing logical error rate \cite{Steane03,PR12,CJL16b,TYC17}. Conventional FTEC schemes require many ancillas during error syndrome measurements. For example, the Shor-style \cite{Shor96,DA07} and the Knill-style \cite{Knill05a} error corrections, which apply to any stabilizer code, require as many ancillas as the maximum weight of the stabilizer generators and twice the blocklength, respectively.  Steane-style error correction \cite{Steane97,Steane02} which applies to any CSS code requires one code block of ancillas.

Recently, several FTEC schemes that use only a few ancillas and are applicable to the \codepar{7,1,3} Steane code \cite{Steane96b} have been proposed. The scheme due to Yoder and Kim for the \codepar{7,1,3} code uses 2 ancillas (9 qubits in total) \cite{YK17}. Their ideas are further developed to a ``flag FTEC'' scheme which, for the \codepar{7,1,3} code, also uses 2 ancillas \cite{CR17a}. A flag FTEC scheme for any \codepar{n,k,d} stabilizer code requires $d+1$ ancillas \cite{CR20}, where the schemes for some specific code families may require fewer \cite{CR17a,CB18,TCL20,CKYZ20,CZYHZ20}. 
The flag technique that uses a few ancillas to detect high-weight errors can also be applied to various fault-tolerant schemes \cite{CR17b,CC19,SCC19,BCC19,BXG19,Vui18,GMB19,LA19,CN20,DB20,RBMS21}. Another FTEC scheme applicable to the \codepar{7,1,3} code was proposed by Reichardt; the scheme extracts several syndrome bits at once and requires no ancillas, provided that there are at least two code blocks (so at least 14 qubits are required in total) \cite{Reichardt18}.


In order to achieve an arbitrarily low error rate through code concatenation, the FTEC scheme used with the code must be modified accordingly.  One way to do this is replacing all physical qubits with code blocks and replacing all physical gates with corresponding logical gates \cite{AGP06}. For the \codepar{7,1,3} code, each qubit (including each ancilla qubit) required in an FTEC scheme will become a block of 7 physical qubits in the modified scheme.  Following this modification, the schemes in \cite{YK17,CR17a} applied to the \codepar{49,1,9} concatenated Steane code will require 63 qubits in total.  Meanwhile, the scheme in \cite{Reichardt18} requires 98 qubits in total, encoding 2 logical qubits.  Note that the maximum weight for the stabilizer generators increases quickly with concatenation.  These difficulties motivate our main question: how to reduce the number of ancillas required for an FTEC scheme for a concatenated code?

In this paper, we introduce a technique called weight parity error correction (WPEC) and construct an FTEC scheme for the \codepar{49,1,9} concatenated Steane code using only \emph{two} ancilla qubits. The scheme relies on the fact that, for the \codepar{7,1,3} code, errors with the same syndrome and weight parity differ by the multiplication of some stabilizer; these errors are thus logically equivalent and need not be distinguished from one another.  Therefore, error correction on each subblock of 7 qubits in the \codepar{49,1,9} code can be accomplished using only two ingredients: the error syndrome and the weight parity of error in each subblock.  Most importantly, the weight parity for each subblock of 7 qubits in the \codepar{49,1,9} code can be obtained from the full syndrome measurement.  Using this idea in conjunction with 2 ancilla qubits, our FTEC protocol for the \codepar{49,1,9} code can correct up to 3 faults. As a result, our protocol can suppress the error rate from $p$ to $O(p^4)$ using 51 qubits in total.

The paper is organized as follows: In \cref{sec:WPEC}, we observe the aforementioned equivalence between errors of any weight with the same syndrome and weight parity, and describe WPEC. In \cref{sec:Protocol}, we provide sufficient conditions for WPEC, then we provide syndrome extraction circuits and an FTEC protocol for the \codepar{49,1,9} concatenated Steane code using only two ancilla qubits. In \cref{sec:WPEC_Golay}, WPEC is extended to the \codepar{23,1,7} Golay code and concatenated Steane codes with more than 2 levels of concatenation. Last, we discuss our results and directions for future works in \cref{sec:Discussion}.

\section{Weight parity error correction for the Steane code}
\label{sec:WPEC}%
The Steane code \cite{Steane96b}, also known as the \codepar{7,1,3} code, is a quantum error correcting code that encodes 1 logical qubit into 7 physical qubits and can correct any error on up to 1 qubit. It has several desirable properties for fault-tolerant quantum computation, e.g., logical Clifford operations are transversal \cite{Shor96}. The Steane code is a code in the Calderbank-Shor-Steane (CSS) code family \cite{CS96,Steane96b} where $X$-type and $Z$-type errors can be detected and corrected separately.  The Steane code in the stabilizer formalism can be constructed from the parity check matrix of the classical $[7,4,3]$ Hamming code through the CSS construction \cite{Gottesman97}. In addition, it is known that any classical Hamming code can be rearranged into a cyclic code, a binary linear code in which any cyclic shift of a codeword is also a codeword \cite{MS77}. We can describe the Steane code in cyclic form with the following stabilizer generators:
\begingroup
\setlength\arraycolsep{1pt}	
\begin{equation}
\begin{matrix}
g^x_1: &X &I &X &X &X &I &I, & \quad & g^z_1: &Z &I &Z &Z &Z &I &I,\\
g^x_2: &I &X &I &X &X &X &I, & \quad & g^z_2: &I &Z &I &Z &Z &Z &I,\\
g^x_3: &I &I &X &I &X &X &X, & \quad & g^z_3: &I &I &Z &I &Z &Z &Z.
\end{matrix}
\end{equation}%
\endgroup
The generators of a stabilizer code define not only the codespace, but also the measurements that give rise to the error syndrome. When these measurements are imperfect, different sets of generators for the same code can have different fault-tolerant properties.
The use of the Steane code in cyclic form gives some advantages in distinguishing high-weight errors in consecutive form \cite{TCL20} (see \cref{sec:Discussion} for more details).
We can choose the logical $X$ and logical $Z$ operators to be $X^{\otimes 7} S$ and $Z^{\otimes 7} T$ for any stabilizer operators $S,T$.  
With this convention, we state the following crucial property of the Steane code that goes into our construction: 

\begin{fact}
Let $M$ be any $Z$-type operator (a tensor product of $I$s and $Z$s) defined on 7 qubits. Suppose $M$ commutes with all X-type generators of the \codepar{7,1,3} code. If $M$ has even weight, then it is a logical $I$; otherwise, if $M$ has odd weight, then it is a logical $Z$.
\label{fact:fact1}
\end{fact}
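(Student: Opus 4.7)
The plan is to reduce the statement to a combinatorial fact about the classical $[7,4,3]$ Hamming code. Represent any $Z$-type operator $M$ on 7 qubits by its binary support vector $v \in \mathbb{F}_2^7$, so $M = Z^v$. Since $Z^v$ anticommutes with $X^u$ iff $u \cdot v = 1$ over $\mathbb{F}_2$, the commutation hypothesis is equivalent to $v$ being orthogonal to the support vectors of $g^x_1, g^x_2, g^x_3$. These three vectors are the rows of the parity-check matrix of the $[7,4,3]$ Hamming code, so $v$ must lie in the Hamming code itself, a 4-dimensional subspace containing $2^4 = 16$ codewords.

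Next I would invoke the well-known weight enumerator of the $[7,4,3]$ Hamming code, namely $1 + 7z^3 + 7z^4 + z^7$. Hence every admissible $v$ has weight $0$, $3$, $4$, or $7$; in particular the weights come in two parity classes: even weights $\{0,4\}$ and odd weights $\{3,7\}$. There are $1+7=8$ codewords of each parity.

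It remains to identify the two parity classes with the two cosets of the stabilizer group inside the normalizer. The $Z$-type stabilizer subgroup is generated by $g^z_1, g^z_2, g^z_3$, whose support vectors each have weight $4$; a direct check (or observing that the $Z$-stabilizers correspond to the dual Hamming/simplex $[7,3,4]$ code) shows that this subgroup consists exactly of the $8$ even-weight Hamming codewords. The logical operator $Z^{\otimes 7}$ has odd weight $7$, and multiplying it by any even-weight stabilizer gives another odd-weight vector; thus the coset $Z^{\otimes 7} \cdot \langle g^z_1, g^z_2, g^z_3 \rangle$ is precisely the set of $8$ odd-weight codewords. Combining these two facts yields the claim: even weight forces $M$ into the stabilizer (logical $I$), odd weight forces $M$ into the nontrivial coset (logical $Z$).

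The only step requiring real content is the weight enumerator of the Hamming code; everything else is linear-algebraic bookkeeping. Since this enumerator is standard, the proof is essentially immediate, which is consistent with the authors' remark that exhaustive verification also suffices (one simply enumerates the $16$ codewords directly and observes the $0/3/4/7$ weight pattern together with the parity split between the stabilizer subgroup and its $Z^{\otimes 7}$-coset).
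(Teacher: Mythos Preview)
Your proposal is correct. The paper does not actually supply a proof of this Fact; it only remarks that it ``can be proved in many ways, including exhaustive verification'' and leaves it to the reader. Your argument is a clean structural proof rather than a brute-force enumeration: by passing to support vectors in $\mathbb{F}_2^7$, the commutation hypothesis identifies the admissible $M$ with the $[7,4,3]$ Hamming code, and the weight enumerator $1+7z^3+7z^4+z^7$ immediately separates the codewords into even-weight (the $[7,3,4]$ simplex subcode, i.e., the $Z$-stabilizers) and odd-weight (the $Z^{\otimes 7}$ coset). This is more illuminating than the exhaustive check the paper alludes to, though of course either route is short for a code of this size.
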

\begin{proof}
	Because $M$ is a $Z$-type operator that commutes with all $X$-type generators, $M$ is either a stabilizer of $Z$ type or a logical $Z$ operator.  
	Let $E_1$ and $E_2$ be $Z$-type operators with weights $w_1 $ and $w_2$. Then $E_1E_2$ is an operator of weight $w_1+w_2-2c$, where $c$ is the number of qubits supported by both $E_1$ and $E_2$. Observe that all stabilizer generators of the Steane code have even weight, and a multiplication of two operators with even weight always gives an operator with even weight. Thus, all stabilizers of $Z$ type (which are logical $I$ operators) have even weight. Moreover, a $Z$-type operator which is a logical $Z$ operator is of the form $Z^{\otimes 7} T$ where $T$ is a stabilizer of $Z$ type. Therefore, all logical $Z$ operators of $Z$ type have odd weight.
\end{proof}

For a Pauli error $E$ on a block of 7 qubits, the syndrome is a 6-bit string denoted by $s(E) = (s_x|s_z)$ where $s_x,s_z \in \mathbb{Z}_2^3$.  The $i$-th bit of $s_x$ (or $s_z$) is $0$ if $E$ commutes with $g_i^x$ (or $g_i^z$), and $1$ if $E$ anticommutes with $g_i^x$ ($g_i^z$).  If $E$ occurs to a codeword of the Steane code, $s(E)$ corresponds to the outcomes of measuring the six generators (0 and 1 correspond to $+1$ and $-1$ outcomes, respectively).
The Steane code is a \emph{perfect CSS code of distance $3$} 
meaning that for each $s_x$, $(s_x|000)$ is the syndrome of a \emph{unique}
weight-1 $Z$-type error, which we denote as $E^z_{wt\mhyphen 1}(s_x)$, 
and similarly each $(000|s_z)$ is the syndrome of a unique $X$-type
error \footnote{This is from the fact that the \codepar{7,1,3} Steane code can be constructed from the classical $[7,4,3]$ Hamming code which is a \emph{classical perfect code}, a code which saturates the classical Hamming bound \cite{MS77}.}.
For CSS codes, the $X$-type and $Z$-type error corrections are
independent of one another.  Furthermore, we focus on CSS codes in
which $X$-type and $Z$-type generators have the same form, and the same
method applies to both types of error correction.  So we focus on $Z$
errors for simplicity.  Since $Z$-type errors have trivial $s_z$, 
we focus on $s_x$ from now on.

With the above notations, consider the following simple error
correction procedure on the Steane code:
if the syndrome is $(s_x|000)$, do nothing if $s_x$ is trivial,
apply $E^z_{wt\mhyphen 1}(s_x)$ otherwise. We observe that if the
syndrome is caused by a $Z$-type error, then the procedure outputs
the encoded data transformed by a logical $I$ or logical $Z$.
This is because the actual $Z$-type error combined with the correction
remains $Z$-type and commutes with all of $g^x_{1,2,3}$, so the
conclusion follows from \cref{fact:fact1}. 
%

If a codeword is corrupted by an arbitrary $Z$-type error $E$, the
above procedure always recovers the codeword, but sometimes with an
undesirable logical $Z$ error.
The technique of weight parity error correction, to be developed 
next, is a revised procedure that will \emph{always} correct the error 
$E$, but it requires knowing whether $E$ has odd or even weight.
Measuring the error weight parity should not be done on a single layer of Steane code since it measures a logical operator on the Steane code. Fortunately, the parity information can be safely learnt for the constituent blocks when we concatenate the Steane code with itself.
We will describe these ideas in detail in the rest of this section,
and apply them for fault-tolerant error correction in the next
section.


First, we use \cref{fact:fact1} to show that $Z$-type errors with the same syndrome \emph{and} the same weight parity (whether odd or even) differ by the multiplication of some stabilizer.

\begin{claim}{Logical equivalence of errors with the same syndrome and weight parity for the \codepar{7,1,3} code}
	
Suppose $E_1,E_2$ are arbitrary $Z$-type errors (of any weights) on the \codepar{7,1,3} code with the same syndrome. Then, $E_1$ and $E_2$ have the same weight parity iff $E_1 = E_2S$ for some stabilizer $S$.
\label{claim:equiv_Steane}%
\end{claim}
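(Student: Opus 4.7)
The plan is to reduce both directions to analyzing the single operator $M := E_1 E_2$ and applying Fact~1. Since $E_1$ and $E_2$ are both $Z$-type, $M$ is $Z$-type and hence commutes automatically with every $Z$-type generator $g^z_i$; since $E_1$ and $E_2$ share the same $X$-syndrome, $M$ also commutes with every $X$-type generator $g^x_i$. Thus $M$ satisfies the hypotheses of Fact~1, so $M$ is a stabilizer (logical $I$) when it has even weight, and a logical $Z$ when odd. I would also record the elementary parity identity $\mathrm{wt}(E_1 E_2) \equiv \mathrm{wt}(E_1) + \mathrm{wt}(E_2) \pmod 2$ for $Z$-type Paulis, which follows from writing their supports as subsets of $\{1,\ldots,7\}$ and noting that $E_1 E_2$ is supported on the symmetric difference.

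For the forward direction ($\Rightarrow$), equal weight parity of $E_1$ and $E_2$ makes $M$ have even weight, so Fact~1 yields $M = S$ for some stabilizer $S$. Rearranging $E_1 E_2 = S$ and using $E_2^2 = I$ gives $E_1 = E_2 S$, as required.

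For the reverse direction ($\Leftarrow$), suppose $E_1 = E_2 S$ for some stabilizer $S$. Then $S = E_1 E_2$ is $Z$-type, so it lies in the subgroup of the stabilizer generated by $g^z_1, g^z_2, g^z_3$. Each of these generators has weight $4$, and weight parity is additive under products of $Z$-type Paulis, so every $Z$-type stabilizer has even weight. Hence $M$ has even weight, and the parity identity forces $E_1$ and $E_2$ to share weight parity. The only nontrivial ingredient beyond Fact~1 is this structural observation that the $Z$-type part of the stabilizer group consists entirely of even-weight Paulis; I expect this to be the main (though minor) obstacle, and for the cyclic generators listed it is immediate by inspection.
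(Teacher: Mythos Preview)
Your proposal is correct and follows essentially the same route as the paper: both define the product $E_1 E_2$, observe it is $Z$-type with trivial syndrome, and invoke Fact~1 together with the parity identity $\mathrm{wt}(E_1 E_2)\equiv \mathrm{wt}(E_1)+\mathrm{wt}(E_2)\pmod 2$ (the paper writes this as $w_1+w_2-2k$). The only cosmetic difference is that for the reverse direction you argue separately that $Z$-type stabilizers have even weight via the weight-$4$ generators, whereas the paper gets both directions at once from the dichotomy in Fact~1 (a $Z$-type operator with trivial syndrome is logical $I$ \emph{iff} its weight is even).
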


\begin{proof}
Let $w_1,w_2$ be the weights of $E_1, E_2$, respectively. Let $N = E_1 E_2$ (so $E_2 = E_1 N$ as $E_1 = E_1^\dagger$). The weight of $N$ is equal to $w_1+w_2-2c$ where $c$ is the number of qubits supported by both $E_1$ and $E_2$. As $N$ commutes with all of $g^x_{1,2,3}$, from \cref{fact:fact1}, $N$ is a logical $I$ if and only if $w_1+w_2-2c$ is even (when $E_1$ and $E_2$ have the same weight parity).
\end{proof}



Second, we use \cref{claim:equiv_Steane} to provide a method for error
correction of $Z$-type error of arbitrary weight on the Steane code,
\emph{if the weight parity of the error is known}:

\begin{definition}{Weight parity error correction (WPEC) for the \codepar{7,1,3} code}
	
Suppose a $Z$-type error $E$ occurs to a codeword of the \codepar{7,1,3} code. Let $s_x$ and $w$ be the syndrome and the weight of $E$, $E^z_{wt\mhyphen 1}(s_x)$ be the weight-1 $Z$-type operator with syndrome $s_x$, and $E^z_{wt\mhyphen 2}(s_x)$ be any weight-2 $Z$-type operator with syndrome $s_x$, respectively. The following procedure is called \emph{weight parity error correction (WPEC):
\begin{enumerate}
	\item If $s_{x}$ is trivial, do nothing if $w$ is even, or apply any logical $Z$ if $w$ is odd.
	\item If $s_{x}$ is nontrivial, apply $E^z_{wt\mhyphen 1}(s_{x})$ if $w$ is odd, or apply $E^z_{wt\mhyphen 2}(s_{x})$ if $w$ is even.
\end{enumerate}}
\label{Def:WPEC}%
\end{definition}
WPEC always returns the original codewords because in each case, the error $E$ and the correction operation have the same syndrome and weight parity, so by \cref{claim:equiv_Steane}, the correction is logically equivalent to $E$.  

WPEC allows us to correct high-weight errors in the Steane code, but we need to know the weight parity of the error. The weight parity of a $Z$-type error is the outcome of measuring $X^{\otimes 7}$, so learning the weight parity is equivalent to a logical $X$ measurement, which can destroy the superposition of the logical state.
Fortunately, there is a setting in which the weight parity can be obtained without affecting the encoded data.  
Consider code concatenation in which each qubit of an error correcting code ${\cal C}_2$ is encoded into another quantum error correcting code ${\cal C}_1$.  If ${\cal C}_1$ is chosen to be the Steane code, the weight parity of each codeblock can potentially be learnt from the syndrome of ${\cal C}_2$.  We will develop WPEC for the concatenated Steane code in the rest of this section and show the advantage in the context of fault tolerance in the next section.

Consider code concatenation using two Steane codes in cyclic form. The resulting code which is a \codepar{49,1,9} code can be described by 48 stabilizer generators.
The first group of 42 generators, called 1st-level generators, have the form $g^x_i\otimes I^{\otimes 42},g^z_i\otimes I^{\otimes42},I^{\otimes 7} \otimes g^x_i\otimes I^{\otimes 35},I^{\otimes 7} \otimes g^z_i\otimes I^{\otimes 35},\dots,I^{\otimes 42}\otimes g^x_i,I^{\otimes 42}\otimes g^z_i$ for $i=1,2,3$.
The remaining 6 of these generators, called 2nd-level generators, have the form
\begingroup
\setlength\arraycolsep{1pt}	
\begin{equation}
\begin{matrix}
\tilde{g}^x_1: &\mathbf{X} &\mathbf{I} &\mathbf{X} &\mathbf{X} &\mathbf{X} &\mathbf{I} &\mathbf{I}, & \quad & \tilde{g}^z_1: &\mathbf{Z} &\mathbf{I} &\mathbf{Z} &\mathbf{Z} &\mathbf{Z} &\mathbf{I} &\mathbf{I},\\
\tilde{g}^x_2: &\mathbf{I} &\mathbf{X} &\mathbf{I} &\mathbf{X} &\mathbf{X} &\mathbf{X} &\mathbf{I}, & \quad & \tilde{g}^z_2: &\mathbf{I} &\mathbf{Z} &\mathbf{I} &\mathbf{Z} &\mathbf{Z} &\mathbf{Z} &\mathbf{I},\\
\tilde{g}^x_3: &\mathbf{I} &\mathbf{I} &\mathbf{X} &\mathbf{I} &\mathbf{X} &\mathbf{X} &\mathbf{X}, & \quad & \tilde{g}^z_3: &\mathbf{I} &\mathbf{I} &\mathbf{Z} &\mathbf{I} &\mathbf{Z} &\mathbf{Z} &\mathbf{Z},
\end{matrix}
\end{equation}%
\endgroup 
where $\mathbf{I}=I^{\otimes 7},\mathbf{X}=X^{\otimes 7},$ and $\mathbf{Z}=Z^{\otimes 7}$.  The logical $X$ and logical $Z$ operators
can be chosen to be $\bar{X}=X^{\otimes 49} S$ and $\bar{Z}=Z^{\otimes 49} T$ for any stabilizer operators $S, T$.
Relevant parts of the error syndrome corresponding to the 1st-level
and the 2nd-level generators will be called 1st-level and 2nd-level
syndromes, respectively.

Let us consider error correction on the \codepar{49,1,9} code and assume for now that error syndromes are reliable (which can be obtained from repetitive measurements). First, consider a simple motivating example, and suppose that a $Z$-type error $E$ acts nontrivially on at most one subblock of 7-qubit code. In order to perform WPEC, the weight parity of $E$ and the subblock in which $E$ occurs must be known. Suppose that $E$ has nontrivial 1st-level syndrome. The subblock in which $E$ occurs is actually the subblock whose corresponding 1st-level syndrome is nontrivial, while the weight parity of $E$ is a measurement result from a 2nd-level generator which acts nontrivially on that subblock (note that the 2nd-level generator must \emph{act nontrivially on all qubits} in such a subblock, thus a choice of 2nd-level generators is important). Now, suppose that $E$ has trivial 1st-level syndrome. The subblock in which $E$ occurs can no longer be identified by the 1st-level syndrome. Fortunately, since the 2nd-level Steane code (${\cal C}_2$) is a distance-3 code, it can identify if any of the $7$ subblocks of \codepar{7,1,3} code (the ${\cal C}_1$ subblocks) has a $Z$-type error logically equivalent to $Z^{\otimes 7}$, thus providing the weight parity for each subblock of \codepar{7,1,3} code. That is, if $E$ has trivial 1st-level syndrome and its weight is odd, the weight parity of $E$ and the subblock in which $E$ occurs can be determined using only the 2nd-level syndrome. (If $E$ has trivial 1st-level syndrome and has even weight, it is a stabilizer and no error correction is needed.)

%
%

If the $Z$-type error is more general and may act on multiple
subblocks, the 2nd-level syndrome may not provide the weight parities
of the subblocks.
Instead, we consider only $Z$-type errors that arise from a small
number of faults in specially designed generator measurements.  We
will show that for these errors, the weight parity for each subblock can
be determined by the 2nd-level syndrome along with the information
whether each subblock has trivial 1st-level syndrome or not.

\begin{figure}[htbp]
	\centering
	\begin{subfigure}{0.49\textwidth}
		\includegraphics[width=\textwidth]{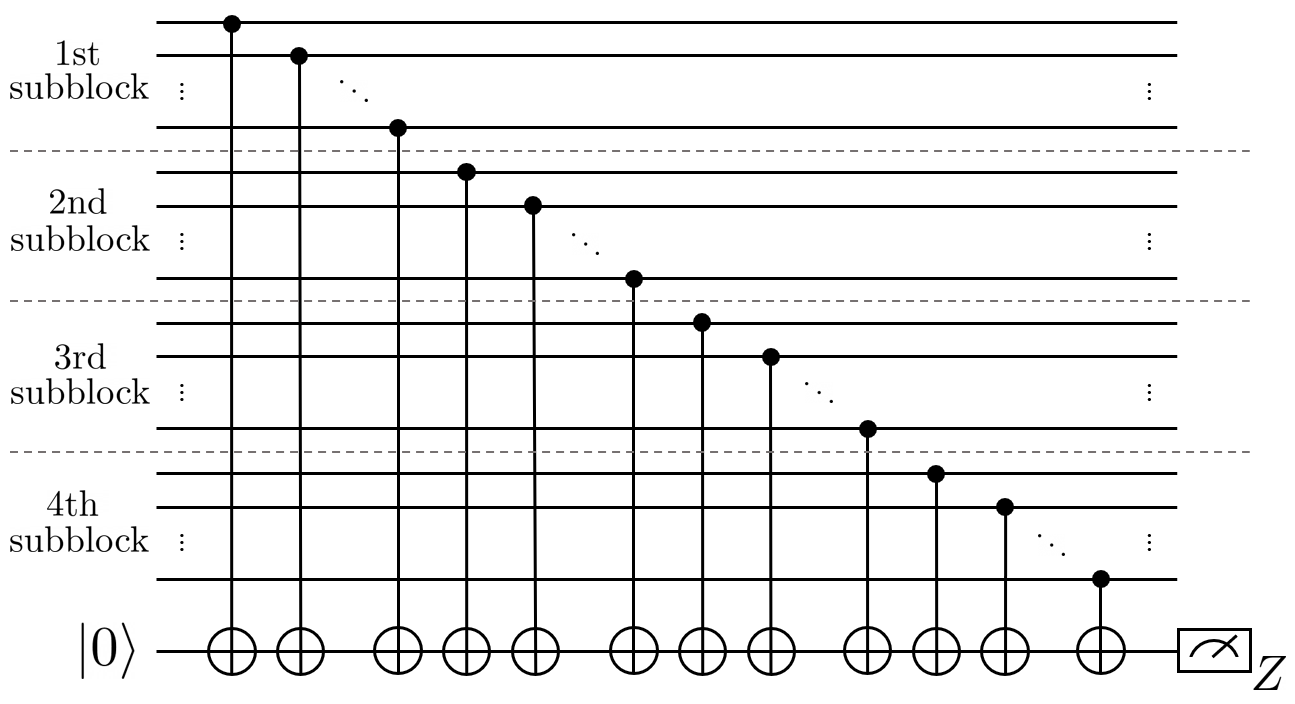}
		\captionsetup{justification=centering}
		\caption{}
		\label{fig:circuit_lv2_normal}
	\end{subfigure}	
	\begin{subfigure}{0.25\textwidth}
		\includegraphics[width=\textwidth]{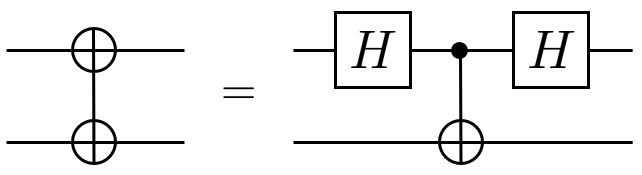}
		\captionsetup{justification=centering}
		\caption{}
		\label{fig:XMeas}
	\end{subfigure}
	\caption{(a) An example of circuit for measuring generator $\tilde{g}^z_1 = \mathbf{ZIZZZII}$. Here we display only the subblocks in which the operator acts nontrivially (the 1st, 2nd, 3rd, and 4th subblocks in the figure correspond to the 1st, 3rd, 4th, and 5th subblocks of $\tilde{g}^z_1$). A circuit for measuring $X$-type operator such as $\tilde{g}^x_1 = \mathbf{XIXXXII}$ can be obtained by replacing each CNOT gate in (a) with the gate illustrated in (b).}
	\label{fig:circuit_normal}
\end{figure}

In particular, let \emph{block parity} $p_x \in \mathbb{Z}_2^7$ be a bitstring, where each bit is the weight parity of the $Z$ error in one subblock, and $0$ and $1$ represent even and odd weights, respectively. Also, define the \emph{triviality} of a subblock to be $0$ or $1$ if the subblock has trivial or nontrivial 1st-level syndrome, and let \emph{block triviality} $\tau_x \in \mathbb{Z}_2^7$ be a $7$-bit string in which the $i$-th bit represents the triviality of the $i$-th subblock.
If the block parity can be accurately determined using the 2nd-level syndrome together with the block triviality (we will elaborate how this can be done later), then we can blockwisely perform WPEC as described in \cref{Def:WPEC} by using the 1st-level syndrome and the weight parity of each subblock.



In this work, we develop an FTEC protocol that uses WPEC to correct high-weight errors arising from up to 3 faults. As an example, consider the measurement of $\tilde{g}^z_1$ using the circuit depicted in \cref{fig:circuit_lv2_normal}. Here we assume that a fault from any two-qubit gate can cause any two-qubit Pauli errors on the qubits where the gate acts nontrivially, and $X$-type and $Z$-type errors can be detected separately. Thus, we may assume that high-weight errors arising from a single CNOT fault is of the form $\mathbf{PIZZZII}$, $\mathbf{IIPZZII}$, $\mathbf{IIIPZII}$, or $\mathbf{IIIIPII}$, where $\mathbf{Z} = Z^{\otimes 7}$, $\mathbf{P}=I^{\otimes 7-m}\otimes Z^{\otimes m}$, and $m\in\{1,\dots,7\}$ (see the analysis of possible errors in \cite{TCL20} for more details). It is not hard to find 2nd-level syndrome, block triviality, and block parity corresponding to each possible error. For example, error $\mathbf{PIZZZII}$ with $m=6$ anticommutes with $g^x_1$ and $\tilde{g}^x_1$, and commutes with the other generators. Thus, its corresponding 2nd-level syndrome, block triviality, and block parity are $(1,0,0), (1,0,0,0,0,0,0)$, and $(0,0,1,1,1,0,0)$, respectively. \cref{tab:ex_1fault} displays all possible high-weight errors arising from a single fault during $\tilde{g}^z_1$ measurement and their corresponding 2nd-level syndrome, block triviality, and block parity. Note that except for the first and the last row (with errors differing by multiplication of a stabilizer), each row has a unique combination of 2nd-level syndrome and block triviality, so the block parity can be determined from the table. Since the 2nd-level syndrome and the block triviality can in turn be obtained from the generator measurements, all possible errors arising from a single CNOT fault during the measurement of $\tilde{g}^z_1$ can be corrected using WPEC. In addition, observe that $\mathbf{ZIZZZII}$ and $\mathbf{I}^{\otimes 7}$ are equivalent up to a multiplication of $\tilde{g}^z_1$ but their block parities are different. Here we can see that multiplying an error with some 2nd-level generators may change its block parity, but its 2nd-level syndrome and block triviality (which is deduced from its 1st-level syndrome) remain the same. In this case, WPEC is still applicable. We say that block parities are equivalent whenever they can be transformed to one another by multiplying the corresponding errors with some stabilizer.


\begin{table}[tbp]
	\begin{center}
		\begin{tabular}{| c | c | c | c | c |}
			\hline
			Form of & \multirow{2}{*}{$m$} & 2nd-level & Block & \multirow{2}{*}{Block parity} \\
			error & & syndrome & triviality & \\
			\hhline{|=|=|=|=|=|}
			\multirow{3}{*}{$\mathbf{PIZZZII}$} & 7 & (0,0,0) & (0,0,0,0,0,0,0) & (1,0,1,1,1,0,0) \\
			\cline{2-5}
			& 2,4,6 & (1,0,0) & (1,0,0,0,0,0,0) & (0,0,1,1,1,0,0) \\
			\cline{2-5}
			& 1,3,5 & (0,0,0) & (1,0,0,0,0,0,0) & (1,0,1,1,1,0,0) \\
			\hline
			\multirow{3}{*}{$\mathbf{IIPZZII}$} & 7 & (1,0,0) & (0,0,0,0,0,0,0) & (0,0,1,1,1,0,0) \\
			\cline{2-5}
			& 2,4,6 & (0,0,1) & (0,0,1,0,0,0,0) & (0,0,0,1,1,0,0) \\
			\cline{2-5}
			& 1,3,5 & (1,0,0) & (0,0,1,0,0,0,0) & (0,0,1,1,1,0,0) \\
			\hline
			\multirow{3}{*}{$\mathbf{IIIPZII}$} & 7 & (0,0,1) & (0,0,0,0,0,0,0) & (0,0,0,1,1,0,0) \\
			\cline{2-5}
			& 2,4,6 & (1,1,1) & (0,0,0,1,0,0,0) & (0,0,0,0,1,0,0) \\
			\cline{2-5}
			& 1,3,5 & (0,0,1) & (0,0,0,1,0,0,0) & (0,0,0,1,1,0,0) \\
			\hline
			\multirow{3}{*}{$\mathbf{IIIIPII}$} & 7 & (1,1,1) & (0,0,0,0,0,0,0) & (0,0,0,0,1,0,0) \\
			\cline{2-5}
			& 2,4,6 & (0,0,0) & (0,0,0,0,1,0,0) & (0,0,0,0,0,0,0) \\
			\cline{2-5}
			& 1,3,5 & (1,1,1) & (0,0,0,0,1,0,0) & (0,0,0,0,1,0,0) \\
			\hline
			$\mathbf{I}^{\otimes 7}$ & - & (0,0,0) & (0,0,0,0,0,0,0) & (0,0,0,0,0,0,0) \\
			\hline
		\end{tabular}
		\caption{All possible forms of data errors arising from a single fault occurred during syndrome measurement using a circuit in \cref{fig:circuit_lv2_normal} (where $\mathbf{P}=I^{\otimes 7-m}\otimes Z^{\otimes m}$). The block parity corresponding to each form of errors can be determined by the 2nd-level syndrome and the block triviality obtained from a full syndrome measurement. By knowing the block parity, high-weight errors can be corrected using WPEC.}
		\label{tab:ex_1fault}%
	\end{center}
\end{table}%
In an actual fault-tolerant protocol, we want to distinguish all possible high-weight errors arising from various types of faults up to 3 faults, including any gate faults, faults during the preparation and measurement of ancilla qubits, and faults during wait time. The circuit construction in \cref{fig:circuit_lv2_normal}, however, might not cause errors that can be distinguished. Note that possible errors arising from CNOT faults heavily depend on the ordering of CNOT gates being used in the measurement circuit. In \cref{sec:Protocol}, we will discuss conditions in which WPEC can be applied. We will also provide a family of circuits with specific CNOT ordering and an FTEC protocol for the \codepar{49,1,9} code which can correct high-weight errors arising from up to 3 faults.

\section{Fault-tolerant error correction protocol for the \codepar{49,1,9} code}
\label{sec:Protocol}%

Fault-tolerant error correction is one of the most essential gadgets for constructing large-scale quantum computers. Every FTEC protocol must satisfy the following conditions:

\begin{definition}{Fault-tolerant error correction \cite{AGP06}}
	
	For $t = \lfloor (d-1)/2\rfloor$, an error correction protocol using a distance-$d$ stabilizer code is \emph{$t$-fault tolerant} if the following two conditions are satisfied:
	\begin{enumerate}
		\item For any input codeword with error of weight $v_{1}$, if $v_{2}$ faults occur during the protocol with $v_{1} + v_{2} \le t$, ideally decoding the output state gives the same codeword as ideally decoding the input state.
		\item If $v$ faults happen during the protocol with $v \le t$, no matter how many errors are present in the input state, the output state differs from any valid codeword by an error of at most weight $v$.
	\end{enumerate}
	\label{Def:FaultTolerantDef}%
\end{definition}
In this work, we develop an FTEC protocol for the \codepar{49,1,9} code that can correct up to 3 faults. The circuits for measuring 1st-level and 2nd-level generators are shown in \cref{fig:circuit_WPEC}. The types of faults being considered include faults that happen to the physical gates, faults during the preparation and measurement of ancilla qubits in the circuits, and faults during wait time.



	

\begin{figure}[htbp]
	\centering
	\begin{subfigure}{0.49\textwidth}
		\includegraphics[width=\textwidth]{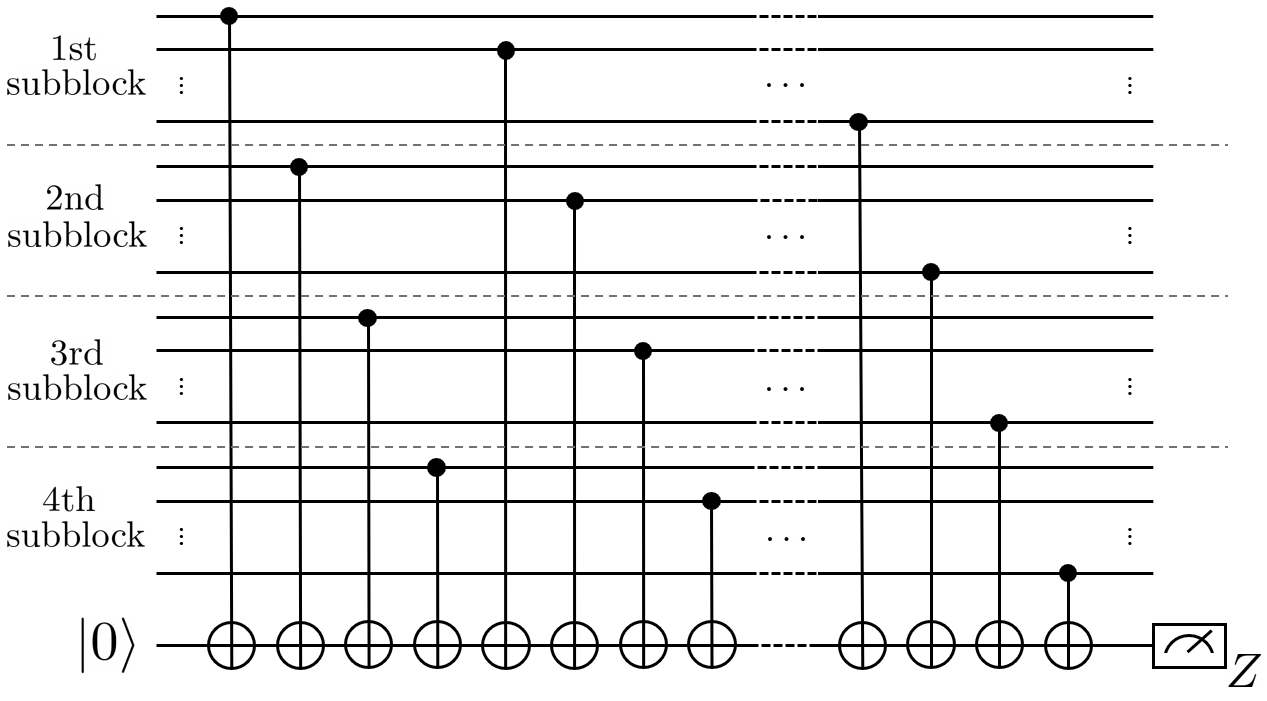}
		\captionsetup{justification=centering}
		\caption{}
		\label{fig:circuit_lv2_permuted}
	\end{subfigure}	
	\begin{subfigure}{0.32\textwidth}
		\includegraphics[width=\textwidth]{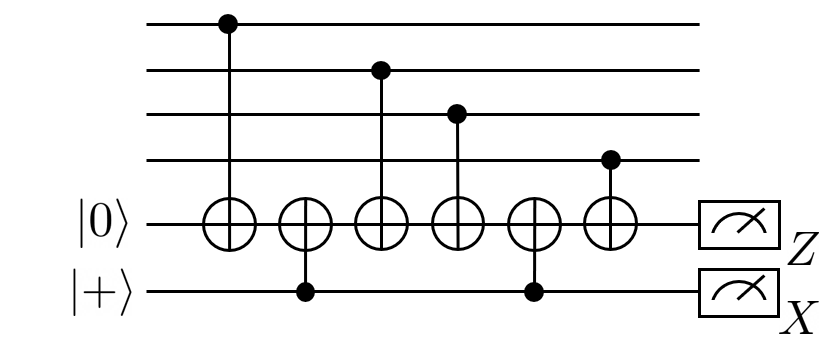}
		\captionsetup{justification=centering}
		\caption{}
		\label{fig:circuit_lv1_flag}
	\end{subfigure}
	\caption{Circuits for measuring 2nd-level and 1st-level generators being used in this work are shown in (a) and (b), respectively. With this gate permutation, the block parity corresponding to every possible high-weight error arising from up to 3 faults can be accurately determined. As such, our protocol can correct up to 3 faults.}
	\label{fig:circuit_WPEC}
\end{figure}

Let \emph{fault combination} be a collection of faults up to 3 faults (which may be of different types and can cause errors of weight much higher than 3 on the data qubits). Our goal is to distinguish all fault combinations that can be confusing and may cause WPEC to fail. Similar to an example of WPEC in \cref{sec:WPEC}, we can categorize all possible fault combinations into subsets by their 2nd-level syndrome and block triviality. The following sufficient condition can determine when the WPEC technique can be applied: 

\begin{claim}{Sufficient condition for WPEC}
	
	Let $\mathcal{F}$ be the set of all possible fault combinations during an FTEC protocol for the \codepar{49,1,9} code and let $\mathcal{F}_k \subseteq \mathcal{F}$ be a subset of fault combinations with the same 2nd-level syndrome and the same block triviality (where $\bigcup_k \mathcal{F}_k = \mathcal{F}$). WPEC is applicable in the FTEC protocol if each $\mathcal{F}_k$ satisfies one of the following conditions: 
	\begin{enumerate}
		\item Data errors from all fault combinations in $\mathcal{F}_k$ give equivalent block parities.
		\pagebreak
		\item Not every data error from a fault combination in $\mathcal{F}_k$ gives the same block parity (or its equivalence), but for each pair of fault combinations in $\mathcal{F}_k$ whose block parities of their data errors are not equivalent, their 1st-level syndromes or flag measurement results (or both) are different.
	\end{enumerate}
	\label{claim:suff_cond}
\end{claim}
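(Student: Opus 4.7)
The plan is to show that, under either condition, the decoder has enough observable information to select a block parity $p_x$ which, together with the 1st-level syndromes, drives the seven parallel WPEC procedures of Definition 1 to produce a correction that agrees with the true data error up to a stabilizer of the $[[49,1,9]]$ code. Once such a correction is applied the codeword is recovered up to a logical identity, which is exactly what the first requirement of Definition 2 demands for any fault pattern of weight at most three. From one measurement round the protocol records (i) the 2nd-level syndrome, (ii) the 1st-level syndromes of all seven subblocks, hence the block triviality $\tau_x$, and (iii) the flag bits from the circuit of Fig.~3(b). Partitioning $\mathcal{F}$ into the subsets $\mathcal{F}_k$ fixes the first two of these coordinates, so the decoder is free to use the flag pattern, the detailed 1st-level syndrome, and $k$ itself to choose $p_x$.

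For case 1, fix an $\mathcal{F}_k$ in which every fault combination produces a data error whose block parity lies in a single equivalence class, and let $p_x$ be any representative. By the remark following Table~I, two block parities are equivalent precisely when the corresponding data errors differ by a product of 2nd-level generators, i.e.\ by a $[[49,1,9]]$ stabilizer. Applying blockwise WPEC with $p_x$ and the observed 1st-level syndromes produces, on each subblock, a $Z$-type operator $R_i$ with the same $[[7,1,3]]$ syndrome and the same weight parity as the restriction of the true error; by \cref{claim:equiv_Steane}, $R_i$ differs from that restriction by a $[[7,1,3]]$ stabilizer. Therefore $R_1 \otimes \cdots \otimes R_7$ differs from the actual data error by a $[[49,1,9]]$ stabilizer, and the recovery succeeds. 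For case 2 I would further partition $\mathcal{F}_k$ by the observed 1st-level syndrome and flag pattern; by hypothesis any two elements of $\mathcal{F}_k$ whose block parities are inequivalent lie in distinct sub-partitions, so within each sub-partition the situation reduces to case 1. Since the decoder observes both the detailed 1st-level syndrome and the flag bits, this refined classification is actually computable, and a representative $p_x$ can be assigned to each sub-partition.

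The main obstacle I foresee is being precise about the equivalence relation on block parities used in the statement. Multiplying a data error by a 2nd-level generator $\tilde{g}^z_i$ flips seven bits of $p_x$ at once, so equivalence classes are cosets of a linear subspace of $\mathbb{Z}_2^7$, not singletons; I must verify that WPEC's correction is insensitive to the choice of representative, i.e.\ that two corrections produced from block parities in the same coset differ only by a $[[49,1,9]]$ stabilizer. This is the step that bridges the single-block statement of \cref{claim:equiv_Steane} to the concatenated claim being proved, and once it is in place the rest of the argument is bookkeeping over the decoder's observable record.
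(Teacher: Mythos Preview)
Your proposal is correct and follows essentially the same approach as the paper: use the 2nd-level syndrome and block triviality to identify $\mathcal{F}_k$; in case~1 pick any representative block parity and feed it with the 1st-level syndromes into Definition~\ref{Def:WPEC}; in case~2 refine using the 1st-level syndrome and flag results before doing the same. The paper's own proof is considerably terser than yours and simply asserts that a working block parity can be found in each case and then invokes Definition~\ref{Def:WPEC}; your explicit reduction of case~2 to case~1 via sub-partitioning, and your appeal to Claim~\ref{claim:equiv_Steane} blockwise, make the logic more transparent. Your flagged obstacle---checking that the WPEC correction is insensitive to the choice of representative within a block-parity equivalence class---is a genuine point the paper leaves implicit, and your suggested resolution (two representatives differ by a product of 2nd-level generators, hence the two corrections differ by a $[[49,1,9]]$ stabilizer) is exactly right.
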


\begin{proof}
	Whenever subset $\mathcal{F}_k$ satisfies the first condition in \cref{claim:suff_cond}, we can find a block parity that works for all fault combinations in $\mathcal{F}_k$ using only the 2nd-level syndrome and the block triviality. A correction operator for each fault combination can be found following the definition of WPEC (\cref{Def:WPEC}) using the 1st-level syndrome and the block parity. On the other hand, if $\mathcal{F}_k$ satisfies the second condition in \cref{claim:suff_cond}, a block parity cannot be accurately determined using only the 2nd-level syndrome and the block triviality. Fortunately, with the assistance of the 1st-level syndrome and the flag measurement result, fault combinations that correspond to non-equivalent block parities can be distinguished and the block parity of each fault combination can be found. Similarly, a correction operator for each fault combination can be determined following \cref{Def:WPEC}. 
\end{proof}



Whether possible fault combinations satisfy \cref{claim:suff_cond} or not depends heavily on the ordering of the CNOT gates and the use of flag qubits in the circuits for syndrome measurements. In our FTEC protocol for the \codepar{49,1,9} code, the CNOT gates being used in the circuits for measuring 2nd-level generator are applied in the following ordering:
\begin{equation}
(1,8,15,22,2,9,16,23,\dots,7,14,21,28), \label{eq:CNOTperm}
\end{equation}%
where the numbers 1 to 28 represent the qubits in which $\tilde{g}^z_i$ acts nontrivially. That is, CNOT gates are applied on the first qubit in each subblock for all subblocks, then on the second qubit in each subblock for all subblocks, and so on. The circuit for measuring $\tilde{g}^z_1$ is shown in \cref{fig:circuit_lv2_permuted}. In addition, CNOT gates being used in the circuits for measuring 1st-level generator are in the normal ordering as shown in \cref{fig:circuit_lv1_flag}. Note that there is no flag qubit involved in the measurement of a 2nd-level generator, and there is one flag qubit in the circuit for measuring a 1st-level generator.

Consider the case that there are some faults during $Z$-type generator measurements. Faulty circuits can produce nontrivial flag measurement results and cause error of any weight on the data qubits. Our goal is to detect and correct such an error using the flag measurement results from the faulty circuits, together with 1st-level and 2nd-level syndromes obtained from subsequent syndrome measurements. In particular, let the \emph{flag vector} $\in \mathbb{Z}^{21}_2$ be a bitstring wherein each bit is the flag measurement result from each circuit for measuring $g^z_i$ on each of the 7 subblocks. We define the \emph{cumulative flag vector} $f_x \in \mathbb{Z}^{21}_2$ to be the entry-wise sum of flag vectors (modulo 2) obtained from $g^z_i$ measurements accumulated from the first round up until the current round of measurements (see the protocol described below for the definition of a round of measurements). Cumulative flag vector $f_x$ together with 1st-level syndrome $s_x \in \mathbb{Z}^{21}_2$, 2nd-level syndrome $\tilde{s}_x \in \mathbb{Z}^{3}_2$, and block triviality $\tau_x \in \mathbb{Z}^{7}_2$ from the latest round of measurements will be used for distinguishing all possible fault combinations that can occur during the syndrome measurements as described in \cref{claim:suff_cond}. Using the computer simulation described in \cref{app:sim1}, we can verify that \cref{claim:suff_cond} is satisfied when the number of input errors $v_1$ and the number of faults $v_2$ satisfy $v_1+v_2 \leq 3$. A table of possible data errors and their corresponding $s_x, \tilde{s}_x, \tau_x, f_x$, and block parity $p_x$ (similar to \cref{tab:ex_1fault}) can also be obtained from the simulation. Moreover, the subsets $\mathcal{F}_k$ can be deduced from this table (see \cref{app:sim1} for more details).


Let the \emph{outcome bundle} be the collection of 1st-level syndrome $s = (s_x|s_z)$, 2nd-level syndrome $\tilde{s} = (\tilde{s}_x|\tilde{s}_z)$, block triviality $\tau = (\tau_x|\tau_z)$, and cumulative flag vector $f = (f_x|f_z)$ obtained during a single round of full syndrome measurement, where subscripts $x$ and $z$ denote results corresponding to $X$-type and $Z$-type generator measurements. Using the simulation result together with the fact that $X$-type and $Z$-type errors can be corrected separately, an FTEC protocol for the \codepar{49,1,9} code can be constructed as follows.

\pagebreak


\textbf{FTEC protocol for the \codepar{49,1,9} code}

A full syndrome measurement, or a round of measurements, measure the generators in the following order: measure all $\tilde{g}^z_i$'s, then all $\tilde{g}^x_i$'s, then all $g^z_i$'s, then all $g^x_i$'s. Perform full syndrome measurements until the outcome bundles are repeated 4 times in a row. Afterwards, perform the following error correction procedure:

\begin{enumerate}
	\item Find the subset $\mathcal{F}_k$ corresponding to $\tilde{s}_x$ and $\tau_x$ from the table of possible errors (obtained from the simulation in \cref{app:sim1}).
	\begin{enumerate}
		\item If $\mathcal{F}_k$ satisfies Condition 1 in \cref{claim:suff_cond}, 
		use a block parity of any fault combination in $\mathcal{F}_k$.
		\item If $\mathcal{F}_k$ satisfies Condition 2 in \cref{claim:suff_cond}, use a block parity of any combination in $\mathcal{F}_k$ that corresponds to $s_x$ and $f_x$.
		\item If there is no $\mathcal{F}_k$ from the table of possible errors which corresponds to  $\tilde{s}_x$ and $\tau_x$, use the block parity with all 1's.
	\end{enumerate}	
      \item Let $s_{x,i}$ be the 1st-level syndrome and $w_i$ be the weight parity of the $i$-th subblock. Apply $Z$-type error correction on each subblock as given by \cref{Def:WPEC}.  In particular: 
		\begin{enumerate}
			\item If $s_{x,i}$ is trivial, apply $ZZIZIII$ (logically equivalent to $Z^{\otimes 7}$) to the $i$-th subblock when $w_i$ is odd, or do nothing when $w_i$ is even.
			\item If $s_{x,i}$ is nontrivial, apply $E^z_{wt\mhyphen 1}(s_{x,i})$ to the $i$-th subblock when $w_i$ is odd, or apply $E^z_{wt\mhyphen 2}(s_{x,i})$ when $w_i$ is even.
		\end{enumerate}		
	\item If there is no $\mathcal{F}_k$ from the table of possible errors which corresponds to  $\tilde{s}_x$ and $\tau_x$, further apply the following error correction procedure: find a Pauli operator from $\{\mathbf{ZIIIIII},\mathbf{IZIIIII},\dots,\mathbf{IIIIIIZ}\}$ which corresponds to $\tilde{s}_x$, then apply such an operator (or its logically equivalent operator) to the data qubits.
	
	\item Repeat steps 1--3 but use $\tilde{s}_z$, $s_z$, $\tau_z$, and $f_z$ to deduce the $X$-type error correction ($E^x_{wt\mhyphen 1}(s_{z,i})$, $E^x_{wt\mhyphen 2}(s_{z,i})$, or $XXIXIII$) for each subblock.
\end{enumerate}


Here we will assume that there are at most 3 faults during the protocol and the error is of $Z$ type. The assumption on the number of faults guarantees that the outcome bundles must be repeated 4 times in a row within 16 rounds (the outcome bundle cannot keep changing forever since the number of faults is limited). To verify that the protocol above is 3-fault tolerant, i.e., it satisfies the FTEC conditions in \cref{Def:FaultTolerantDef} with $t=3$ (the \codepar{49,1,9} code acts as a distance-7 code), first let us consider the case that there are no faults during the last round of full syndrome measurement. In this case, the outcome bundle corresponds to the actual data error. From the simulation in \cref{app:sim1}, we know that whenever $v_1+v_2 \leq 3$, one of the conditions in \cref{claim:suff_cond} is satisfied and the block parity can be accurately determined. The operation in Step 2 will give the correct output state, thus both of the FTEC conditions are satisfied. On the other hand, if $v_1+v_2 > 3$ but $v_2 \leq 3$, $\tilde{s}_x$ and $\tau_x$ may not correspond to any error in the table of possible errors. By using the block parity with all 1's, the operation in Step 2 will project the state in each subblock back to the subspace of the \codepar{7,1,3} code, where each subblock has an error equivalent to either $\mathbf{I}$ or $\mathbf{Z}$ after the operation. Afterwards, the operation in Step 3 will project the output state back to the subspace of the \codepar{49,1,9} code. Thus, the second condition in \cref{Def:FaultTolerantDef} is satisfied.


Now, let us consider the case that there are some faults during the last round of full syndrome measurement. The outcome bundle we obtained from the last round may not correspond to the data error since some errors arising during the last round may be undetectable. Since we perform full syndrome measurements until the outcome bundles are repeated 4 times in a row and there are at most 3 faults during the whole protocol, at least one of the last 4 rounds of full syndrome measurement must be correct. From the simulation result in \cref{app:sim1}, the outcome bundle obtained from the last round (which is equal to that obtained from any correct round in the last 4 rounds) can definitely correct the error occurred before the last correct round. Here we want to verify that whenever the last 4 rounds have $v$ faults (where $v \leq 3$), after the last round, the weight of the data error is increased by no more than $v$. This can be verified using the computer simulation described in \cref{app:sim2}. By applying operation in Step 2 (and possibly Step 3) as previously discussed, the output state differs from a valid codeword by an error of weight at most $v$, regardless of the number of input errors. Thus, the second condition in \cref{Def:FaultTolerantDef} is satisfied. Furthermore, whenever $v_1+v_2 \leq 3$, we will obtain an output state which differs from a correct output state by an error of weight at most 3. Therefore, the first condition in \cref{Def:FaultTolerantDef} is also satisfied.


The analysis for $X$-type errors is similar to that of $Z$-type errors. Note that during the measurement of $Z$-type generators, a single gate fault can cause an $X$-type error of weight 1 on the data qubits. This error can be considered as an input error for the measurement of $X$-type generators, thus the same analysis is applicable.




\section{Weight parity error correction for other codes}
\label{sec:WPEC_Golay}%

Besides the Steane code, we find that the WPEC technique is applicable to the \codepar{23,1,7} Golay code \cite{Steane03}, which is a perfect CSS code of distance 7. The \codepar{23,1,7} Golay code can correct up to 3 errors and can be constructed from the parity check matrix of the classical $[23,12,7]$ Golay code \cite{MS77}. In cyclic form, the \codepar{23,1,7} Golay code can be constructed from the parity check matrix,
\begingroup
\setlength\arraycolsep{2pt}	
\begin{equation}
\begin{pmatrix}
1 & 1 & 1 & 1 & 1 & 0 & 0 & 1 & 0 & 0 & 1 & 0 & 1 & 0 & 0 & 0 & 0 & 0 & 0 & 0 & 0 & 0 & 0 \\
0 & 1 & 1 & 1 & 1 & 1 & 0 & 0 & 1 & 0 & 0 & 1 & 0 & 1 & 0 & 0 & 0 & 0 & 0 & 0 & 0 & 0 & 0 \\
0 & 0 & 1 & 1 & 1 & 1 & 1 & 0 & 0 & 1 & 0 & 0 & 1 & 0 & 1 & 0 & 0 & 0 & 0 & 0 & 0 & 0 & 0 \\
0 & 0 & 0 & 1 & 1 & 1 & 1 & 1 & 0 & 0 & 1 & 0 & 0 & 1 & 0 & 1 & 0 & 0 & 0 & 0 & 0 & 0 & 0 \\
0 & 0 & 0 & 0 & 1 & 1 & 1 & 1 & 1 & 0 & 0 & 1 & 0 & 0 & 1 & 0 & 1 & 0 & 0 & 0 & 0 & 0 & 0 \\
0 & 0 & 0 & 0 & 0 & 1 & 1 & 1 & 1 & 1 & 0 & 0 & 1 & 0 & 0 & 1 & 0 & 1 & 0 & 0 & 0 & 0 & 0 \\
0 & 0 & 0 & 0 & 0 & 0 & 1 & 1 & 1 & 1 & 1 & 0 & 0 & 1 & 0 & 0 & 1 & 0 & 1 & 0 & 0 & 0 & 0 \\
0 & 0 & 0 & 0 & 0 & 0 & 0 & 1 & 1 & 1 & 1 & 1 & 0 & 0 & 1 & 0 & 0 & 1 & 0 & 1 & 0 & 0 & 0 \\
0 & 0 & 0 & 0 & 0 & 0 & 0 & 0 & 1 & 1 & 1 & 1 & 1 & 0 & 0 & 1 & 0 & 0 & 1 & 0 & 1 & 0 & 0 \\
0 & 0 & 0 & 0 & 0 & 0 & 0 & 0 & 0 & 1 & 1 & 1 & 1 & 1 & 0 & 0 & 1 & 0 & 0 & 1 & 0 & 1 & 0 \\
0 & 0 & 0 & 0 & 0 & 0 & 0 & 0 & 0 & 0 & 1 & 1 & 1 & 1 & 1 & 0 & 0 & 1 & 0 & 0 & 1 & 0 & 1
\end{pmatrix},
\nonumber
\end{equation}%
\endgroup 
which can be generated from the check polynomial $h(x) = x^{12}+x^{10}+x^7+x^4+x^3+x^2+x+1$ \cite{MS77}. The $i$-th $Z$-type (or $X$-type) generator of this code will be denoted as $g^z_i$ (or $g^x_i$) where $i=1,\dots,11$. The logical $X$ and logical $Z$ operators of this code can be chosen to be $\bar{X}=X^{\otimes 23} S$ and $\bar{Z}=Z^{\otimes 23} T$ for any stabilizer operators $S, T$.

Similar to the \codepar{7,1,3} code, we can prove the equivalence of errors with the same syndrome and the same weight parity as follows:



\begin{claim}{Logical equivalence of errors with the same syndrome for the \codepar{23,1,7} Golay code}
	
	Suppose $E_1,E_2$ are arbitrary $Z$-type errors (of any weights) on the \codepar{23,1,7} code with the same syndrome. Then, $E_1$ and $E_2$ have the same weight parity iff $E_1 = E_2S$ for some stabilizer $S$.
	\label{claim:equiv_Golay}%
\end{claim}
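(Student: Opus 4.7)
The plan is to mirror the proof of Claim 1 exactly, once an appropriate analog of Fact 1 is established for the Golay code. That is, the bulk of the work is to prove the following auxiliary statement: any $Z$-type operator $M$ on 23 qubits that commutes with all $X$-type generators $g^x_1,\dots,g^x_{11}$ is a logical $I$ if $\text{wt}(M)$ is even, and a logical $Z$ if $\text{wt}(M)$ is odd. Granting this, the claim follows verbatim as in the proof of Claim~\ref{claim:equiv_Steane}: set $N = E_1 E_2$, observe that $N$ is $Z$-type, commutes with all $X$-type generators because $E_1$ and $E_2$ share a syndrome, and has weight $w_1+w_2-2k$ whose parity equals that of $w_1+w_2$. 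The auxiliary statement then gives $N\in\mathcal{S}$ iff $w_1,w_2$ have the same parity, i.e., $E_1 = E_2 S$ for some stabilizer $S$ iff $E_1,E_2$ have the same weight parity.

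To establish the auxiliary statement, I would proceed in three steps. First, since $M$ is $Z$-type and commutes with every $X$-type generator, $M$ lies in the $Z$-part of the normalizer of the stabilizer group. For the \codepar{23,1,7} code this $Z$-part is generated by the $Z$-type stabilizer generators $g^z_1,\dots,g^z_{11}$ together with any representative of the logical $\bar{Z}$, so every such $M$ equals either a stabilizer or a stabilizer multiplied by $\bar{Z}$. Second, I would show that every $Z$-type stabilizer has even weight. The support of each $g^z_i$ is a row of the given parity-check matrix, which is a cyclic shift of the coefficient vector of $h(x)=x^{12}+x^{10}+x^7+x^4+x^3+x^2+x+1$; this vector has $8$ ones, so each generator has even weight. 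More generally, the supports of $Z$-type stabilizers are codewords of the dual Golay code $G^\perp=[23,11,8]$, and the all-ones vector $\mathbf{1}$ belongs to the Golay code $G=(G^\perp)^\perp$ (a standard property of the binary Golay code, which is complementation-invariant). Hence $\mathbf{1}\cdot v \equiv 0\pmod 2$ for every $v\in G^\perp$, so every $Z$-type stabilizer has even weight. Third, I would note that the representative $\bar{Z}=Z^{\otimes 23}$ has weight $23$ (odd), and since multiplying by any even-weight stabilizer changes the total weight by $8-2\ell$ (even), every representative of $\bar{Z}$ has odd weight.

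Combining these three observations yields the desired dichotomy: among the two cosets of $\mathcal{S}$ inside the $Z$-part of the normalizer, one (the stabilizer coset) consists entirely of even-weight operators and the other (the $\bar{Z}$ coset) consists entirely of odd-weight operators. Weight parity therefore determines which coset $M$ lies in, completing the auxiliary statement and hence the claim.

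The main obstacle is the even-weight property of the $Z$-type stabilizers, which is not quite immediate from the displayed parity-check matrix alone; it hinges on the classical fact that $\mathbf{1}\in G$. I expect the cleanest way to handle this is to cite this standard property of the binary Golay code (it follows, e.g., from the weight enumerator or from the fact that $G$ is the puncturing of the self-dual $[24,12,8]$ Golay code, which contains $\mathbf{1}$). Everything else is a direct transcription of the Steane-code argument.
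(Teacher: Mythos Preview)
Your proposal is correct and follows essentially the same approach as the paper: both reduce to the analog of Fact~1 (every $Z$-type stabilizer of the Golay code has even weight, every logical $Z$ representative has odd weight) and then invoke the argument of Claim~\ref{claim:equiv_Steane} verbatim. The only difference is that the paper simply asserts these two weight-parity facts (``We can verify that\ldots''), whereas you supply a justification via the classical property $\mathbf{1}\in G$; this extra detail is welcome but not a departure in strategy.
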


\begin{proof}
	We can verify that every $Z$-type stabilizer in the stabilizer group of the \codepar{23,1,7} code has even weight, and every logical $Z$ operator has odd weight. The rest of the proof follows the proof of \cref{claim:equiv_Steane}.
\end{proof}

Let us consider $Z$-type error correction for the \codepar{23,1,7} code. Since the code is a perfect CSS code of distance 7, for each $s_x \in \mathbb{Z}_2^{11}$, $(s_x|0...0)$ is the syndrome of a unique $Z$-type error of weight $\leq 3$. Suppose that a codeword is corrupted by a $Z$-type error with syndrome $s_x$. If we apply the minimal weight error correction corresponding to $s_x$, we sometimes obtain the codeword with undesirable logical $Z$ operator. Fortunately, by knowing the weight parity of the error, the WPEC technique can be applied. The error correction procedure for the \codepar{23,1,7} code is defined as follows:


\begin{definition}{Weight parity error correction for the \codepar{23,1,7} Golay code}
	
	Suppose a $Z$-type error $E$ occurs to a codeword of the \codepar{23,1,7} code. Let $s_x$ and $w$ be the syndrome and the weight of $E$, and let $E^z_{min}(s_x)$ be the unique minimal weight error correction corresponding to the syndrome $s_x$. The following procedure is called \emph{weight parity error correction (WPEC)}:
	\begin{enumerate}
		\item If $E^z_{min}(s_x)$ has even weight (0 or 2), apply $E^z_{min}(s_{x})$ to the data qubits whenever $w$ is even, or apply any $Z$-type operator $P$ that has odd weight and corresponds to $s_x$ to the data qubits whenever $w$ is odd.
		\item If $E^z_{min}(s_x)$ has odd weight (1 or 3), apply $E^z_{min}(s_{x})$ to the data qubits whenever $w$ is odd, or apply any $Z$-type operator $P$ that has even weight and corresponds to $s_x$ to the data qubits whenever $w$ is even.
	\end{enumerate}
	\label{Def:WPEC_Golay}%
\end{definition}

Note that the \codepar{23,1,7} Golay code can be made cyclic, thus it can distinguish high-weight errors in consecutive form \cite{TCL20}. \cref{claim:equiv_Golay} together with the cyclic property give us some possibilities to construct an FTEC protocol for the \codepar{529,1,49} concatenated Golay code in the same way as what we have done for the \codepar{49,1,9} code. We expect that our technique can lead to a protocol which can correct a large number of faults and will compare well with other FTEC schemes. To reach this goal, syndrome extraction circuits with appropriate permutation of gates (and possibly with flag qubits) must be found so that conditions similar to those in \cref{claim:suff_cond} are satisfied. The search for such circuits with careful numerical verification of fault tolerance is a challenging and interesting future research direction.

The WPEC technique may also apply to the code obtained from concatenating the Steane code to the $k$-th level, e.g., the \codepar{7^k,1,3^k} code.
Since the $k$th-level Steane code is a distance-3 code, we expect that a block of errors in the $(k-1)$-th level can be determined and corrected using the syndrome and the block parity defined at the $k$-th level. Again, however, appropriate syndrome extraction circuits must be found, which is beyond the scope of this work.

\section{Discussions and conclusions}
\label{sec:Discussion}%

In this work, we prove the logical equivalence between errors of any weight on 7 qubits which have the same weight parity and correspond to the same error syndrome when error detection is performed by the \codepar{7,1,3} code in \cref{claim:equiv_Steane}. From this result, we introduce the WPEC technique in \cref{Def:WPEC}, which can correct errors of any weight on 7 qubits whenever their weight parity is known. We show that the WPEC technique can be extended to error correction in subblocks of the \codepar{49,1,9} code, and we prove the sufficient condition for WPEC in \cref{claim:suff_cond}. Afterwards, we provide a family of circuits and an FTEC protocol for the \codepar{49,1,9} code which can correct up to 3 faults. We also point out that the WPEC technique seems applicable to FTEC schemes for other codes such as the concatenated Golay code and concatenated Steane code with more than 2 levels of concatenation.

Since the FTEC protocol provided in this work satisfies the definition of FTEC in \cref{Def:FaultTolerantDef} with $t=3$, we can guarantee that the logical error rate is suppressed to $O(p^4)$ whenever the physical error rate is $p$ under the random Pauli noise model. Note that we did not use the full ability of a code with distance 9 which, in principle, can correct up to 4 errors. In terms of error suppression, our FTEC protocol is as good as typical FTEC protocols for a concatenated code which are constructed by replacing each physical qubit with a code block and replacing each physical gate with the corresponding logical gate \cite{AGP06}.

One major advantage of our FTEC protocol is that only 2 ancillas are needed: one ancilla for a syndrome measurement result and another ancilla for a flag measurement result (assuming that the qubit preparation and measurement are fast compared to the gate operation time). As a result, our protocol requires 51 qubits in total. The number of required qubits is very low compared to other FTEC protocols for the \codepar{49,1,9} code; the FTEC schemes in \cite{YK17,CR17a} extended to the \codepar{49,1,9} code require 63 qubits in total (the minimum number of required ancillas is 14 assuming that they are recyclable). Meanwhile, the FTEC protocol in \cite{Reichardt18} which extracts multiple syndromes at once encodes 2 logical qubits and requires no ancilla, but needs to work on two code blocks of 98 qubits in total. Our protocol might not have the fewest total number of qubits compared with other protocols for a different code which can correct up to 3 faults; for example, the flag FTEC protocol in \cite{CR20} applying to the \codepar{37,1,7} 2D color code requires 45 qubits in total. Nevertheless, our work provides a substantial improvement over other FTEC protocols for a \emph{concatenated} code, an approach that is still advantageous in some circumstances. Furthermore, the use of weight parities in error correction may be extended to other families of codes as well \cite{TL21}. We believe that if the protocol requires fewer ancillas, the number of total locations will decrease, which can result in higher accuracy threshold. However, a simulation with careful analysis is required for the accuracy threshold calculation, thus we leave this for future work.

The protocol in \cref{sec:Protocol} which can correct up to 3 faults exploits two techniques: the flag technique which partitions set of possible errors using flag measurement results, and the WPEC technique which corrects errors of any weight using their syndromes and weight parities. It should be emphasized that flag ancillas are not necessarily required for a protocol exploiting WPEC technique; we find that a protocol which uses circuits similar to a circuit in \cref{fig:circuit_lv2_permuted} for 2nd-level syndrome measurements and uses non-flag circuits for 1st-level measurements can correct up to 2 faults.


We point out that the permutation of CNOT gates in the syndrome extraction circuits that make the protocol satisfies \cref{claim:suff_cond} is not unique. We choose the permutation in \cref{eq:CNOTperm} by using the fact that a CSS code constructed from classical cyclic codes can distinguish high-weight errors in the consecutive form \cite{TCL20}. In particular, the circuit is designed in the way that high-weight errors arising in each subblock can be determined by the underlying \codepar{7,1,3} code in cyclic form. We did not prove the optimality of the choice of gate permutation in our protocol, so an FTEC protocol for the \codepar{49,1,9} code with only one ancilla or a protocol that can correct up to 4 faults might be possible.

Last, we note that the WPEC technique introduced in this work is not limited to the \codepar{49,1,9} code. In \cref{sec:WPEC_Golay}, we prove the logical equivalence of errors with the same syndrome and weight parity for the \codepar{23,1,7} Golay code in \cref{claim:equiv_Golay} and provide a WPEC scheme in \cref{Def:WPEC_Golay}, which shows that WPEC can correct some high-weight errors in a subblock of the \codepar{529,1,49} concatenated Golay code. In addition, we expect that WPEC can be applied to any concatenated Steane code with more than 2 levels of concatenation in a similar fashion. However, circuits and a protocol must be carefully designed so that the full error correction ability of the code can be achieved. Another interesting future direction would be extending the WPEC technique to other families of quantum codes.









\section{Acknowledgements}

We thank Christopher Chamberland, Rui Chao, Narayanan Rengaswamy, and Michael Vasmer for interesting comments and suggestions. T.T. acknowledges the support of The Queen Sirikit Scholarship under The Royal Patronage of Her Majesty Queen Sirikit of Thailand. D.L. is supported by an NSERC Discovery grant. The Perimeter Institute is supported in part by the Government of Canada and the Province of Ontario.

\bibliographystyle{ieeetr}
\bibliography{bibtex_FT}

\begin{thebibliography}{10}

\bibitem{Shor96}
P.~W. Shor, ``Fault-tolerant quantum computation,'' {\em Proceedings., 37th
  Annual Symposium on Foundations of Computer Science}, pp.~56--65, 1996.

\bibitem{AB08}
D.~Aharonov and M.~Ben-Or, ``Fault-tolerant quantum computation with constant
  error rate,'' {\em SIAM Journal on Computing}, 2008.

\bibitem{Preskill98}
J.~Preskill, ``Reliable quantum computers,'' {\em Proceedings of the Royal
  Society of London. Series A: Mathematical, Physical and Engineering
  Sciences}, vol.~454, no.~1969, pp.~385--410, 1998.

\bibitem{KLZ96}
E.~Knill, R.~Laflamme, and W.~H. Zurek, ``Threshold accuracy for quantum
  computation,'' {\em arXiv: quant-ph/9610011}, 1996.

\bibitem{AGP06}
P.~Aliferis, D.~Gottesman, and J.~Preskill, ``Quantum accuracy threshold for
  concatenated distance-3 codes,'' {\em Quantum Info. Comput.}, vol.~6, no.~2,
  pp.~97--165, 2006.

\bibitem{Steane03}
A.~M. Steane, ``Overhead and noise threshold of fault-tolerant quantum error
  correction,'' {\em Physical Review A}, vol.~68, no.~4, p.~042322, 2003.

\bibitem{PR12}
A.~Paetznick and B.~W. Reichardt, ``Fault-tolerant ancilla preparation and
  noise threshold lower bounds for the 23-qubit golay code,'' {\em Quantum
  Info. Comput.}, vol.~12, pp.~1034--1080, Nov. 2012.

\bibitem{CJL16b}
C.~Chamberland, T.~Jochym-O'Connor, and R.~Laflamme, ``Overhead analysis of
  universal concatenated quantum codes,'' {\em Phys. Rev. A}, vol.~95,
  p.~022313, Feb 2017.

\bibitem{TYC17}
R.~Takagi, T.~J. Yoder, and I.~L. Chuang, ``Error rates and resource overheads
  of encoded three-qubit gates,'' {\em Phys. Rev. A}, vol.~96, p.~042302, Oct
  2017.

\bibitem{DA07}
D.~P. DiVincenzo and P.~Aliferis, ``Effective fault-tolerant quantum
  computation with slow measurements,'' {\em Phys. Rev. Lett.}, vol.~98,
  p.~020501, Jan 2007.

\bibitem{Knill05a}
E.~Knill, ``Scalable quantum computing in the presence of large detected-error
  rates,'' {\em Phys. Rev. A}, vol.~71, p.~042322, Apr 2005.

\bibitem{Steane97}
A.~M. Steane, ``Active stabilization, quantum computation, and quantum state
  synthesis,'' {\em Phys. Rev. Lett.}, vol.~78, pp.~2252--2255, Mar 1997.

\bibitem{Steane02}
A.~M. Steane, ``Fast fault-tolerant filtering of quantum codewords,'' {\em
  arXiv preprint quant-ph/0202036}, 2002.

\bibitem{Steane96b}
A.~W. Steane, ``{Multiple-Particle Interference and Quantum Error
  Correction},'' {\em Proc. Roy. Soc. Lond.}, vol.~452, pp.~2551--2577, 1996.

\bibitem{YK17}
T.~J. Yoder and I.~H. Kim, ``The surface code with a twist,'' {\em {Quantum}},
  vol.~1, p.~2, Apr. 2017.

\bibitem{CR17a}
R.~Chao and B.~W. Reichardt, ``Quantum error correction with only two extra
  qubits,'' {\em Phys. Rev. Lett.}, vol.~121, p.~050502, Aug 2018.

\bibitem{CR20}
R.~Chao and B.~W. Reichardt, ``Flag fault-tolerant error correction for any
  stabilizer code,'' {\em PRX Quantum}, vol.~1, no.~1, p.~010302, 2020.

\bibitem{CB18}
C.~Chamberland and M.~E. Beverland, ``Flag fault-tolerant error correction with
  arbitrary distance codes,'' {\em {Quantum}}, vol.~2, p.~53, Feb. 2018.

\bibitem{TCL20}
T.~Tansuwannont, C.~Chamberland, and D.~Leung, ``Flag fault-tolerant error
  correction, measurement, and quantum computation for cyclic
  calderbank-shor-steane codes,'' {\em Physical Review A}, vol.~101, no.~1,
  p.~012342, 2020.

\bibitem{CKYZ20}
C.~Chamberland, A.~Kubica, T.~J. Yoder, and G.~Zhu, ``Triangular color codes on
  trivalent graphs with flag qubits,'' {\em New Journal of Physics}, vol.~22,
  no.~2, p.~023019, 2020.

\bibitem{CZYHZ20}
C.~Chamberland, G.~Zhu, T.~J. Yoder, J.~B. Hertzberg, and A.~W. Cross,
  ``Topological and subsystem codes on low-degree graphs with flag qubits,''
  {\em Physical Review X}, vol.~10, no.~1, p.~011022, 2020.

\bibitem{CR17b}
R.~Chao and B.~W. Reichardt, ``Fault-tolerant quantum computation with few
  qubits,'' {\em npj Quantum Information}, vol.~4, no.~1, p.~42, 2018.

\bibitem{CC19}
C.~Chamberland and A.~W. Cross, ``Fault-tolerant magic state preparation with
  flag qubits,'' {\em Quantum}, vol.~3, p.~143, 2019.

\bibitem{SCC19}
Y.~Shi, C.~Chamberland, and A.~Cross, ``Fault-tolerant preparation of
  approximate gkp states,'' {\em New Journal of Physics}, vol.~21, no.~9,
  p.~093007, 2019.

\bibitem{BCC19}
P.~Baireuther, M.~Caio, B.~Criger, C.~W. Beenakker, and T.~E. O’Brien,
  ``Neural network decoder for topological color codes with circuit level
  noise,'' {\em New Journal of Physics}, vol.~21, no.~1, p.~013003, 2019.

\bibitem{BXG19}
A.~Bermudez, X.~Xu, M.~Guti{\'e}rrez, S.~Benjamin, and M.~M{\"u}ller,
  ``Fault-tolerant protection of near-term trapped-ion topological qubits under
  realistic noise sources,'' {\em Physical Review A}, vol.~100, no.~6,
  p.~062307, 2019.

\bibitem{Vui18}
C.~Vuillot, ``Is error detection helpful on ibm 5q chips?,'' {\em Quantum
  Information and Computation}, vol.~18, no.~11\&12, pp.~0949--0964, 2018.

\bibitem{GMB19}
M.~Guti{\'e}rrez, M.~M{\"u}ller, and A.~Berm{\'u}dez, ``Transversality and
  lattice surgery: Exploring realistic routes toward coupled logical qubits
  with trapped-ion quantum processors,'' {\em Physical Review A}, vol.~99,
  no.~2, p.~022330, 2019.

\bibitem{LA19}
L.~Lao and C.~G. Almudever, ``Fault-tolerant quantum error correction on
  near-term quantum processors using flag and bridge qubits,'' {\em Physical
  Review A}, vol.~101, no.~3, p.~032333, 2020.

\bibitem{CN20}
C.~Chamberland and K.~Noh, ``Very low overhead fault-tolerant magic state
  preparation using redundant ancilla encoding and flag qubits,'' {\em npj
  Quantum Information}, vol.~6, no.~1, pp.~1--12, 2020.

\bibitem{DB20}
D.~M. Debroy and K.~R. Brown, ``Extended flag gadgets for low-overhead circuit
  verification,'' {\em Physical Review A}, vol.~102, no.~5, p.~052409, 2020.

\bibitem{RBMS21}
A.~Rodriguez-Blanco, A.~Bermudez, M.~M{\"u}ller, and F.~Shahandeh, ``Efficient
  and robust certification of genuine multipartite entanglement in noisy
  quantum error correction circuits,'' {\em PRX Quantum}, vol.~2, no.~2,
  p.~020304, 2021.

\bibitem{Reichardt18}
B.~W. Reichardt, ``Fault-tolerant quantum error correction for steane's
  seven-qubit color code with few or no extra qubits,'' {\em arXiv preprint
  arXiv:1804.06995}, 2018.

\bibitem{CS96}
A.~R. Calderbank and P.~W. Shor, ``Good quantum error-correcting codes exist,''
  {\em Phys. Rev. A}, vol.~54, pp.~1098--1105, Aug 1996.

\bibitem{Gottesman97}
D.~Gottesman, {\em {Stabilizer Codes and Quantum Error Correction}}.
\newblock PhD thesis, California Institute of Technology, 1997.

\bibitem{MS77}
F.~J. MacWilliams and N.~J.~A. Sloane, {\em The theory of error-correcting
  codes}.
\newblock North-Holland, New York, 1977.

\bibitem{TL21}
T.~Tansuwannont and D.~Leung, ``Achieving fault tolerance on capped color codes
  with few ancillas,'' {\em arXiv preprint arXiv:2106.02649}, 2021.

\end{thebibliography}

\appendix

\section{Simulation of possible faults during the FTEC protocol assuming that the last round of full syndrome measurement has no faults}
\label{app:sim1}%

As discussed in \cref{sec:Protocol}, in order to verify that the FTEC protocol for the $\codepar{49,1,9}$ code satisfies the FTEC conditions in \cref{Def:FaultTolerantDef}, we consider two separate cases: the case that there are some faults during the last round of full syndrome measurement, and the case that there are not. In this section, we provide details of a simulation to show that whenever the number of faults is at most 3 and none of the faults occurs during the last round, all possible fault combinations satisfy \cref{claim:suff_cond} and our protocol can correct errors on the data qubits.


In our protocol, we will perform full syndrome measurements until the outcome bundles are repeated 4 times in a row. Since there are at most 3 faults, the repetition condition will be satisfied within 16 rounds of full syndrome measurement. In this simulation, we assume that the last round of measurement has no faults, thus the high-weight error on the data qubits arising from at most 3 faults is accumulated from up to 15 rounds. We will use the outcome bundle (syndromes and flag vector) obtained from the last round to determine the fault combination that cause the error so that the corresponding weight parity can be found and the WPEC can be done.

We first define mathematical objects being used in our simulation. Let \emph{fault} be an object with two associated variables: \emph{Pauli error} defined on the code block of 49 qubits arising from the fault, and \emph{flag vector} $\in \mathbb{Z}^{21}_2$ which indicates the flag measurement results associated with the fault. There are 4 types of possible faults: faults during wait time (denoted by $W$), faults arising from the measurement of 1st-level and 2nd-level generators (denoted by $G_1$ and $G_2$, respectively), and flag measurement faults (denoted by $F$). A \emph{fault combination} can be constructed by combining faults of same or different types up to 3 faults, i.e., multiplying their Pauli errors and adding their flag vectors. The errors on the input codeword can be considered as wait time faults in which associated Pauli errors do not propagate to other data qubits during the FTEC protocol. In addition, the $X$-type errors on the data qubits arising from the faults during the measurement of $Z$-type generators can be considered as wait time faults during the measurement of subsequent $X$-type generators, in which our simulation is also applicable. (Since the last round of measurement has no faults, we can assume that the syndromes obtained from the last round are correct and the syndrome measurement faults can be neglected.)

Next, we define \emph{fault set} as follows: for faults of type $G_1$ (or type $G_2$), we denote $F^{G_1}_{i,j}$ (or $F^{G_2}_{i,j'}$) to be sets of possible $G_1$ (or $G_2$) faults arising from a circuit for measuring $g^z_j$, $j=1,\dots,21$ (or $\tilde{g}^z_{j'}$, $j'=1,2,3$) where the number of faults is $i \in \{0,1,2,3\}$ ($g^z_j$ refers to the generator $g^z_{(j-1)\text{ mod }3 +1}$ on the $\lceil j/3 \rceil$-th subblock). Also, we denote $F^{W}_{i}$ and $F^{F}_{i}$ to be sets of possible faults of type $W$ and $F$, respectively, where the number of faults is $i\in \{0,1,2,3\}$. In addition, we define \emph{fault set combination} to be a set of fault sets up to 3 sets.

Last, let $v_{G_1},v_{G_2},v_{W},v_{F}$ be the number of faults of type $G_1,G_2,W,$ and $F$, respectively. $(v_{G_1},v_{G_2},v_{W},v_{F})$ that satisfies $v_{G_1}+v_{G_2}+v_{W}+v_{F} \leq 3$ is called \emph{fault number combination}. 

With the definitions of fault, fault combination, fault set, fault set combination, and fault number combination, now we are ready to describe the simulation.

\textbf{Pseudocode for a simulation of possible faults assuming that the last round of full syndrome measurement has no faults}
\begin{enumerate}
	\item Construct fault sets $F^{G_1}_{i,j}, F^{G_2}_{i,j'}, F^{W}_{i},$ and $F^{F}_{i}$ for all $i=0,1,2,3$, $j=1,\dots,21$, $j'=1,2,3$.
	\item Construct all possible fault number combinations that satisfy $v_{G_1}+v_{G_2}+v_{W}+v_{F} \leq 3$.
	\item For each $(v_{G_1},v_{G_2},v_{W},v_{F})$, find all possible fault set combinations from $v_{G_1},v_{G_2},v_{W},v_{F}$. Note that if $v_{G_1}$ is 2, the fault set combination can have $F^{G_1}_{i,j}$ and $F^{G_1}_{i',j'}$ with $i=i'=1$, or have $F^{G_1}_{i,j}$ with $i=2$. Also, if $v_{G_1}$ is 3, the fault set combination can have $F^{G_1}_{i,j}$, $F^{G_1}_{i',j'}$, and $F^{G_1}_{i'',j''}$ with $i=i'=i''=1$, or have $F^{G_1}_{i,j}$ and $F^{G_1}_{i',j'}$ with $i=2,i'=1$, or have $F^{G_1}_{i,j}$ with $i=3$. The same goes for $v_{G_2}$. 
	\begin{enumerate}
		\item For each fault set combination, find all possible fault combinations. Each fault combination can be found by picking one fault from each fault set (up to 3 sets) in the fault set combination, then combine the faults to get the Pauli error $E$ and the cumulative flag vector $f_x$ associated with the fault combination.
		\item For each fault combination, find 1st-level syndrome $s_x$, 2nd-level syndrome $\tilde{s}_x$, block triviality $\tau_x$, and block parity $p_x$ from the associated Pauli error $E$. Store $(s_x,\tilde{s}_x,\tau_x,f_x,p_x)$ for each fault combination in a lookup table. \label{step:table}
	\end{enumerate}
	\item After the lookup table is complete, categorize fault combinations by their 2nd-level syndromes and block trivialities in order to get $\mathcal{F}_k$'s as in \cref{claim:suff_cond}.
	\item For each $\mathcal{F}_k$, verify whether Condition 1 or 2 in \cref{claim:suff_cond} is satisfied.
\end{enumerate}

From the simulation above, we find that all possible fault combinations satisfy \cref{claim:suff_cond}. That is, for each fault combination, we can determine the weight parity from the outcome bundles obtained from the last round of full syndrome measurement by looking at the table constructed in Step \ref{step:table}. The weight parity can be later used to perform WPEC on the code block. With this simulation result, we can verify our FTEC protocol for the \codepar{49,1,9} code satisfies FTEC conditions as previously discussed in \cref{sec:Protocol}.

\section{Simulation of possible faults during the FTEC protocol assuming that the last round of full syndrome measurement has some faults}
\label{app:sim2}%

In \cref{app:sim1}, we describe the simulation of possible faults during the FTEC protocol for the \codepar{49,1,9} code which is applicable to the case that there are no faults during the last round of full syndrome measurement. In this section, we will extend the ideas and construct a simulation of possible faults for the case that some faults occur during the last round.

As previously described, we will perform full syndrome measurements in the protocol until the outcome bundles are repeated 4 times in a row. Now, suppose that the last round of full syndrome measurement has some faults. In this case, we cannot be sure whether the outcome bundle from the last round exactly corresponds to the error in the data qubits. Fortunately, since there are at most 3 faults during the whole protocol, at least one outcome bundle obtained from the last 4 rounds must be correct. Note that the outcome bundles from the last 4 rounds are identical. From the simulation result discussed in \cref{app:sim1}, the outcome bundle from the last round can be used to correct the data error occurred before \emph{any} correct round using the WPEC technique (see \cref{fig:sim2_p1} for more details). The goal of the simulation in this section is to verify that all possible fault combinations which can happen after the last correct round give data error of weight no more than 3. 
\begin{figure}[htbp]
	\centering
	\includegraphics[width=0.48\textwidth]{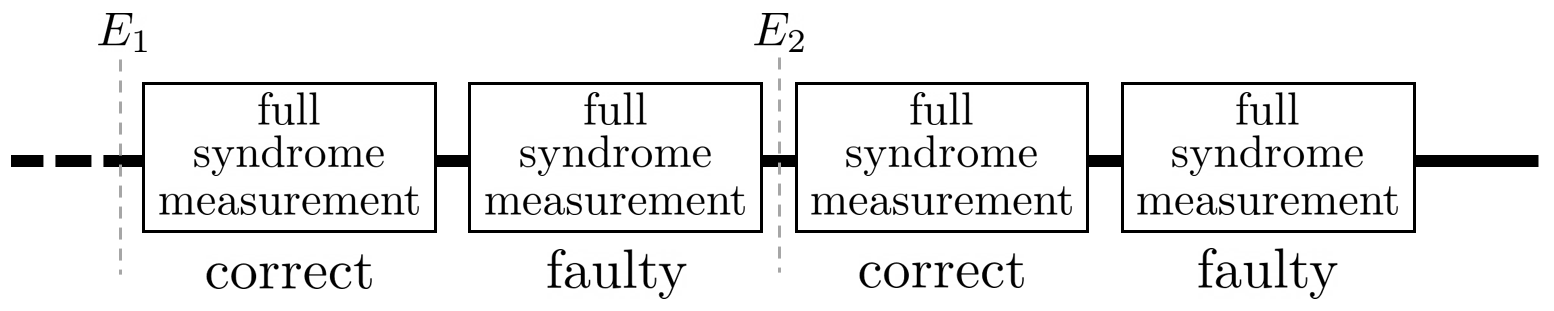}
	\caption{At least one of the last 4 rounds of full syndrome measurement is correct since there are at most 3 faults. Because the outcome bundles from the last 4 rounds are identical, the outcome bundle from the last round can be used in WPEC to correct both errors $E_1$ and $E_2$ (even though $E_1$ and $E_2$ may not be equal).}
	\label{fig:sim2_p1}
\end{figure}

A straightforward way to verify the claim above is to find all possible fault combinations and check the weight of their associated Pauli errors. Unfortunately, this process requires many computational resources. Thus, we will use ``relaxed conditions'' for the verification instead; for each fault combination, if the associated Pauli error and flag vector satisfy all relaxed conditions, the fault combination will be marked (indicating that the fault combination might cause the protocol to fail). We want to make sure that for all fault combination that can cause the protocol to fail (i.e., its associated error has weight more than 3), the fault combination will be marked. Note that some fault combinations may be marked by the relaxed conditions but will not cause the protocol to fail. For this reason, all of the marked fault combinations must be examined after the simulation is done.

We should note that the order of generator measurements is important for the fault tolerance of our FTEC protocol. Consider the protocol description in \cref{sec:Protocol} in which we measure generator measurements in the following order during a single round of full syndrome measurement: measuring all $\tilde{g}^z_i$'s, then all $\tilde{g}^x_i$'s, then all $g^z_i$'s, then all $g^x_i$'s. Let us first consider the errors that can be caught by the $g^x_i$ measurements of the last round. Observe that all $Z$-type data errors that arise before the $g^x_i$ measurements of the last round will be evaluated by the 1st-level syndrome $s_x$. However, some faults during $g^x_i$ measurements of the last round may cause $X$-type or $Z$-type errors that will not be caught by any syndrome. Without loss of generality, we will construct a simulation using an assumption that faults before the $g^x_i$ measurements of the last round can cause only $Z$-type errors, and faults during or after the $g^x_i$ measurements can cause $X$-type or $Z$-type errors. The simulation is also applicable to the case of $g^z_i$ measurements.

Let $E$, $\tilde{E}_a$, and $\tilde{E}_b$ be data errors arising from faults occurred before the last correct round among the last 4 rounds, faults occurred after the last correct round but before the $g^x_i$ measurements of the last round, and faults occurred during or after the $g^x_i$ measurements of the last round. The errors can be illustrated as follows:
\begin{equation}
	\includegraphics[width=0.48\textwidth]{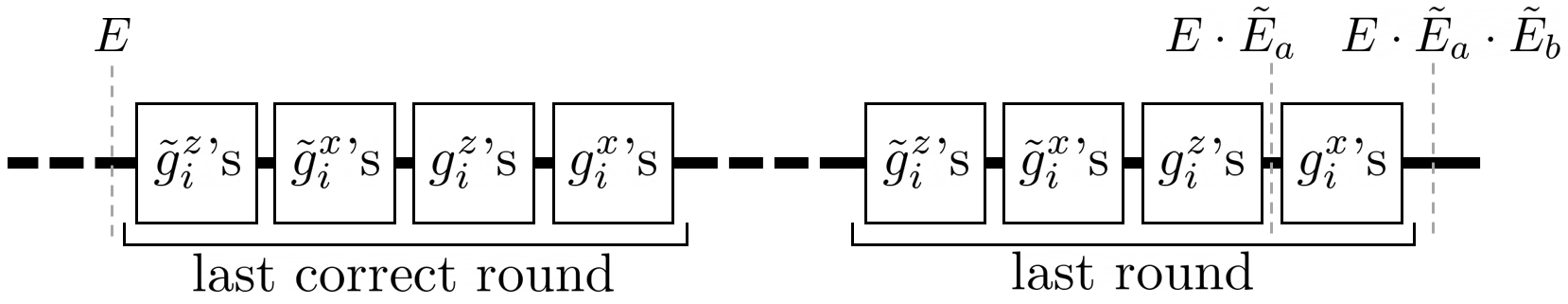} \nonumber
\end{equation}


\noindent The outcome bundle obtained from the last round is equal to the outcome bundle obtained from the correct round and can be used to correct $E$. Thus, we would like to mark every fault combination that can occur after the correct round, corresponds to the trivial outcome bundle (since the outcome bundle obtained from the last round is the same as that obtained from the correct round), and corresponds to a Pauli error of weight more than 3. In particular, our relaxed conditions will examine 3 objects for each fault combination: the 1st-level syndrome, the cumulative flag vector, and the weight of the Pauli error.

The mathematical objects being used in this simulation are similar to those defined in \cref{app:sim1}. In addition, we will consider syndrome measurement faults (denoted by $S$) as another type of faults in this simulation since we will assume that the syndrome measurement during the last 4 rounds can be faulty. Also, let $v_{G_{1a}}$ be the number of $G_1$ faults that occur before the $g^x_i$ measurements of the last round, and let $v_{G_{1b}}$ be the number of $G_1$ faults that occur during or after the $g^x_i$ measurements of the last round. \emph{Fault number combination} is a tuple $(v_{G_{1a}},v_{G_{1b}},v_{G_2},v_W,v_F,v_S)$ that satisfies $v_{G_{1a}}+v_{G_{1b}}+v_{G_2}+v_W+v_F+v_S\leq 3$.

For the first relaxed condition, let us first assume that none of the faults of type $W$ occurs before or during the $g^x_i$ measurements of the last round. For each $(v_{G_{1a}},v_{G_{1b}},v_{G_2},v_W,v_F,v_S)$, error $\tilde{E}_a$ will be constructed from possible fault combinations that correspond to $v_{G_{1a}}$ and $v_{G_2}$. We will mark every fault combination whose associated $\tilde{E}_a$ gives a 1st-level syndrome that has Hamming weight no more than $v_S$ (where the Hamming weight is the number of 1's in a bitstring). This is because each fault of type $S$ can alter at most 1 syndrome bit. Now let us consider the case that some faults of type $W$ occurs before or during the $g^x_i$ measurements. Each $W$ fault (which corresponds to error of weight 1) can change at most 3 bits of $s_x$, but the change will affect only the subblock in which the fault acts nontrivially. We will define function $\sigma(\tilde{E}_a,v_W)$ by the following calculation: 
\begin{enumerate}
	\item Find the 1st-level syndrome of $\tilde{E}_a$ and calculate the Hamming weight of the syndrome for each subblock.
	\item Sort the Hamming weights from all subblocks. The function value is the the sum of the $7-v_W$ smallest Hamming weights.
\end{enumerate}
The value of $\sigma(\tilde{E}_a,v_W)$ is the minimum Hamming weight of the 1st-level syndrome when $v_W$ faults of type $W$ occur. Taking all fault types into account, our first relaxed condition becomes
\begin{equation}
\sigma(\tilde{E}_a,v_W) \leq v_S. \label{eq:relaxed_1}
\end{equation}

For the second relaxed condition, we will consider the cumulative flag vector associated with each fault combination. Note that a flag measurement result will be obtained during any $g^x_i$ or $g^z_i$ measurement. Let $f = (f_x|f_z)$ denote the cumulative flag vector associated with each fault combination, and let $h(f)$ denote the Hamming weight of $f$. Since each fault of type $F$ can alter at most 1 bit of $f$, our second relaxed condition becomes,
\begin{equation}
h(f) \leq v_F. \label{eq:relaxed_2}
\end{equation}

For the third relaxed condition, we will consider the weight of the Pauli error associated with each fault combination. The weight is evaluated at the end of the protocol where the resulting error is caused by all faults of type $G_1$, $G_2$, and $W$ (errors arising during or after the $g^x_i$ measurements of the last round can be $X$-type or $Z$-type). If $W$ faults do not occur before or during the $g^x_i$ measurements at the last round, the weight of the resulting error is the weight of $\tilde{E}_a\cdot\tilde{E}_b$. If they do, each $W$ fault can increase the total weight by at most 1. Hence, our third condition becomes,
\begin{equation}
\mathrm{wt}(\tilde{E}_a\cdot\tilde{E}_b)+v_W > 3. \label{eq:relaxed_3}
\end{equation}
Note that the weight of $\tilde{E}_a\cdot\tilde{E}_b$ can be reduced by multiplication of some stabilizer, and the fault combination will not be marked unless \cref{eq:relaxed_3} is satisfied for all choice of stabilizer.

Using the relaxed conditions in \cref{eq:relaxed_1,eq:relaxed_2,eq:relaxed_3}, our simulation to verify that all possible data errors arising after the correct round have weight no more than 3 can be constructed as follows:

\textbf{Pseudocode for a simulation of possible faults assuming that the last round of full syndrome measurement has some faults}
\begin{enumerate}
	\item Construct fault sets $F^{G_1}_{i,j}, F^{G_2}_{i,j'}$ for all $i=0,1,2,3$, $j=1,\dots,21$, $j'=1,2,3$.
	\item Construct all possible fault number combinations that satisfies $v_{G_{1a}}+v_{G_{1b}}+v_{G_2}+v_W+v_F+v_S\leq 3$.
	\item For each $(v_{G_{1a}},v_{G_{1b}},v_{G_2},v_W,v_F,v_S)$, construct all possible fault set combinations from only $v_{G_{1a}}$, $v_{G_{1b}}$, and $v_{G_2}$. During the construction of each fault set combination, label fault sets that come from $v_{G_{1a}}$ or $v_{G_2}$ with letter $a$, and label fault sets that come from $v_{G_{1b}}$ with letter $b$. Note that if $v_{G_{1a}}$ is 2, the fault set combination can have $F^{G_1}_{i,j}$ and $F^{G_1}_{i',j'}$ with $i=i'=1$, or have $F^{G_1}_{i,j}$ with $i=2$. Also, if $v_{G_{1a}}$ is 3, the fault set combination can have $F^{G_1}_{i,j}$, $F^{G_1}_{i',j'}$, and $F^{G_1}_{i'',j''}$ with $i=i'=i''=1$, or have $F^{G_1}_{i,j}$ and $F^{G_1}_{i',j'}$ with $i=2,i'=1$, or have $F^{G_1}_{i,j}$ with $i=3$. The same goes for $v_{G_{1b}}$ and $v_{G_2}$.
	\begin{enumerate}
		\item For each fault set combination, find all possible fault combinations. Each fault combination can be found by picking one fault from each fault set (up to 3 sets) in the fault set combination. $\tilde{E}_a$ associated with each fault combination can be found by combining only faults from fault sets with label $a$, while $f$ and $\tilde{E}_a \cdot \tilde{E}_b$ can be found by combining faults from all fault sets.	
		\begin{enumerate}
			\item For each fault combination, if \cref{eq:relaxed_1,eq:relaxed_2,eq:relaxed_3} are all satisfied, the fault combination will be marked. Note that for \cref{eq:relaxed_3}, the weight of $\tilde{E}_a \cdot \tilde{E}_b$ must be minimized by stabilizer multiplication.
		\end{enumerate}
	\end{enumerate}	
		
	
\end{enumerate}

From the simulation above, we find that there are 6 fault combinations which are marked by the relaxed conditions in \cref{eq:relaxed_1,eq:relaxed_2,eq:relaxed_3}. All of them correspond to the case that $v_{G_2} = 1$, $v_W = 2$, $v_{G_{1a}},v_{G_{1b}},v_F,v_S=0$, and their associated Pauli errors are trivial on 5 subblocks and have either $IIIIIIZ$ or $ZIIIIII$ on 2 subblocks. We find that $IIIIIIZ$ and $ZIIIIII$ correspond to 1st-level syndrome $(001)$ and $(100)$, respectively. Since $v_S = 0$, the associated 1st-level syndrome must be trivial whenever errors from $W$ faults are taken into account. This can happen only when errors from $W$ faults cancel with the aforementioned Pauli error, which means that the resulting error has weight 0. As a result, we find that all of the marked fault combinations cannot cause data error of weight higher than 3. Similar simulations can be done to show that whenever $v$ faults occur where $v=0,1,2$, the weight of the output error is at most $v$. This result verifies that the FTEC protocol for the \codepar{49,1,9} code satisfies FTEC conditions as previously discussed in \cref{sec:Protocol}.

\end{document}